\newtheorem{theorem}{Theorem}
\newtheorem{lemma}{Lemma}
\definecolor{myc1}{rgb}{0,0,0}
\definecolor{myc2}{rgb}{0,0,1}
\begin{document}

\title{Energy Efficient UAV Communication with Energy Harvesting}

\author{
\IEEEauthorblockN{Zhaohui Yang,
                  Wei Xu,
                    and
                  Mohammad Shikh-Bahaei
                  }
\thanks{Copyright (c) 2015 IEEE. Personal use of this material is permitted. However, permission to use this material for any other purposes must be obtained from the IEEE by sending a request to pubs-permissions@ieee.org.}
\thanks{This work was supported in part by the Engineering and Physical Science Research Council (EPSRC) through the Scalable Full Duplex Dense Wireless Networks (SENSE) grant EP/P003486/1,
 in part by the Natural Science Foundation of Jiangsu Province for Distinguished Young Scholars under Grant BK20190012, and in part by the NSFC under grants 61871109. (\emph{Corresponding author: Wei Xu}.)}
\thanks{Z. Yang and M. Shikh-Bahaei are with Centre for Telecommunications Research, King's College London, London WC2R 2LS, U.K. (Emails: \{yang.zhaohui, m.sbahaei\}@kcl.ac.uk).}
\thanks{ W. Xu is with the National Mobile Communications Research
Laboratory, Southeast University, Nanjing 210096, China  (Email: wxu@seu.edu.cn).}
}

\maketitle

\begin{abstract}
This paper investigates an unmanned aerial vehicle (UAV)-enabled wireless communication system with energy harvesting, where the UAV transfers energy to the users in half duplex or full duplex, and the users harvest energy for data transmission to the UAV. We minimize the total energy consumption of the UAV while accomplishing the minimal data transmission requests of the users. The original optimization problem is decomposed into two  subproblems: path planning subproblem and energy minimization subproblem with fixed path planning. For path planning subproblem, the optimal visiting order is obtained by using the dual method and the trajectory is optimized via the successive convex approximation technique. For energy minimization subproblem with fixed path planning, we firstly obtain the optimal portion of data transmission time within the entire procedure and the optimal transmission power of each user. Then, the the energy minimization subproblem is greatly simplified and it is efficiently solved via a one-dimensional search method. Simulation results are illustrated to verify the theoretical findings.
\end{abstract}

\begin{IEEEkeywords}
UAV communication, energy efficiency, energy harvesting, full duplex, {\color{myc1}{straight flight}}.
\end{IEEEkeywords}
\IEEEpeerreviewmaketitle

\section{Introduction}

With the explosive growth of data traffic, unmanned aerial vehicle (UAV) communication has been deemed as a promising technology for future wireless communication networks \cite{7470933,DBLP:journals/corr/abs-1805-06532,saad2019vision}.
Both spectral and energy efficiency can be improved in scenarios where mobility of UAVs and  line-of-sight (LoS)-dominating ground channel characteristics are well explored \cite{7936620,8048502,8247211,8353131,8379427}.
More specifically,
UAVs can be utilized in various applications, such as data collection \cite{7469804,8119562,8432487,8316986}, wireless power transfer \cite{8489918,8417659,8540379,8736248,wu2019minimum,8365881,hu2018optimal,8434285}, relaying \cite{5937283,7959158,8278204,8629002}, device-to-device communications \cite{7412759}, caching \cite{8254370,7875131},  and mobile edge computing \cite{8764580}.

One line of work in the existing literature about UAV communication is UAV-aided ubiquitous coverage \cite{6863654,7918510,8038014,7762053,7510820,7486987}, where the UAVs are deployed to assist existing terrestrial communication infrastructure.
To fully exploit the degrees of freedom for designing UAV-enabled communications, it is crucial to investigate resource allocation in UAV-enabled wireless communication networks.
In \cite{6863654}, the altitude of UAV was optimized to provide maximum coverage on the ground.
To maximize the coverage using the minimum transmit power, an optimal location and altitude placement algorithm  was investigated in \cite{7918510} for UAV-base stations (BSs).
With different quality-of-service (QoS) requirements of users, authors in \cite{8038014} studied the three-dimension (3D) UAV-BS placement that maximizes the number of users in the coverage.
Exploiting the flexibility of UAV placement, the number of UAVs required for serving a certain area was considered in \cite{7762053}.
To further consider network delay, the optimal placement and distribution of cooperative UAVs was presented in \cite{7569080}.
In delay-constrained communication scenarios, \cite{8438896} investigated the fundamental throughput-delay tradeoff in UAV-enabled communications.
The authors in \cite{8779596} solved the mission completion time optimization  for multi-UAV-enabled data collection. The UAV trajectory was optimized in \cite{8419316} for parameter estimation in wireless sensor networks.


On the other hand, energy saving is critical for UAV communications especially in Internet of Things applications \cite{7469804}.
In order to prolong the lifetime of a sensor network, wireless energy consumption was minimized in \cite{8119562}.
Under a more practical energy consumption model of the UAV, it was pointed out that the propulsion energy is much larger than the communication-related energy \cite{7888557}.
Therefore, to minimize the dominating component of energy consumption, the authors in \cite{8432487} minimized the total flight time of a UAV while allowing sensors to successfully upload a certain amount of data.
Further considering the energy consumption of both user and UAV, the tradeoff between the propulsion energy and the wireless energy of the served user was investigated in~\cite{8316986}.
There are two major differences between this paper and \cite{8316986}.
One difference is that this paper investigates the total energy minimization for the rotary-wing UAV, while the fixed-wing UAV was adopted in \cite{8316986}.
The other difference is that this paper considers the general multiuser case, while only single user was investigated in \cite{8316986}.

Recently, energy harvesting \cite{6907966,7081080,6623062,DBLP:journals/corr/abs-1803-07123,8241822,8294215} has received a great deal of attention  in  prolonging the lifetime of low-power devices.
{\color{myc1}{Different from conventional wireless powered communication network (WPCN),
UAV-enabled WPCN can exploit the mobility of UAVs to further improve the system performance \cite{8489918,8417659,8540379,8736248,wu2019minimum}.
In \cite{8489918} and \cite{8417659}, the minimal uplink throughput among all users was maximized for UAV-enabled WPCN.
Considering weighted harvest-then-transmit protocol, the sum throughput of all users was maximized in \cite{8540379}.
To further consider the tradeoff between mission completion time and energy consumption, the energy-time region was obtained via jointly optimizing the UAV trajectory, user scheduling and time allocation \cite{8736248}.
For multi-UAV-enabled WPCN, the minimal throught maximization problem was investigated in \cite{wu2019minimum}.}}
The total amount of harvested energy for all devices was maximized  during a finite charging period for UAV communication in \cite{8365881}, and alternatively the minimal harvested energy among all devices was optimized \cite{hu2018optimal}.
However, the contributions  in \cite{8489918,8417659,8540379,8736248,wu2019minimum,8365881,hu2018optimal} ignored the UAV height optimization.

In this paper, we study a rotary-wing UAV communication system with energy harvesting, where the propulsion energy is explored and UAV height is optimized. The UAV serves as a data collector for multiple users. It broadcasts wireless energy to each user, while the user utilizes the harvested energy to transmit data to the UAV.
The contributions of this paper are summarized as:

\begin{enumerate}
  \item We formulate the problem of energy  minimization that jointly optimizes the UAV trajectory, user transmission power, and mission completion time.
 For the communication between the UAV and each user, half-duplex (HD) and full-duplex (FD) modes are investigated.
 \item The analytical models for the propulsion energy consumption of  a rotary-wing UAV with acceleration and deceleration are derived for straight flight and vertical flight. To obtain the optimal visiting order of gathering data from all users, the dual method is adopted.
  \item  For HD, the optimal relationship between the energy harvesting time and the transmission time is revealed. For FD, the optimal transmission time is obtained.
These findings ensure that the optimal solution of UAV height to the energy minimization problem can be effectively obtained via a one-dimensional (1D) search.
\end{enumerate}

The rest of this paper is organized as follows.
In Section~$\text{\uppercase\expandafter{\romannumeral2}}$, we introduce the system model and problem formulation.
Path planning and energy minimization with fixed path planning
are addressed in Section $\text{\uppercase\expandafter{\romannumeral 3}}$ and Section $\text{\uppercase\expandafter{\romannumeral 4}}$, respectively.
Numerical results are shown in Section $\text{\uppercase\expandafter{\romannumeral5}}$
and conclusions are finally drawn in Section $\text{\uppercase\expandafter{\romannumeral6}}$.


\section{System Model and Problem Formulation}
{\color{myc1}{
We consider a rotary-wing UAV-enabled wireless communication system with one UAV serving a set $\mathcal K$ of $K$ users.
The UAV serves as a data collector gathering information data from all users.
In the downlink the UAV transfers wireless energy to charge the users, while in the uplink the users utilize the harvested energy to transmit wireless information to the UAV.
Without loss of generality, we consider a 3D Cartesian coordinate system such that
the location of user $k$ is fixed at $(x_k,y_k,0)$ and the initial location of the UAV, point A$_0$, is at $(x_0,y_0,H)$, as shown in Fig.~\ref{sys1fig0}.
The UAV  returns back to point A$_0$ after the entire procedure of energy transfer and uplink data reception for all $K$ users.

\begin{figure}
\centering
\includegraphics[width=3.5in]{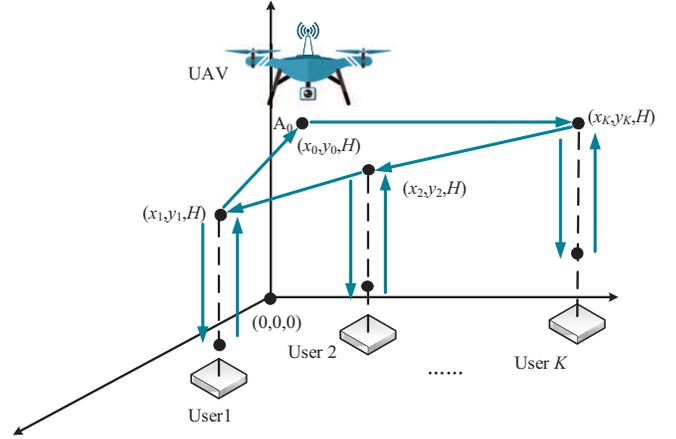}
\caption{A UAV-enabled wireless communication system.}\label{sys1fig0}
\end{figure}

To gather data from all users, the UAV visits all $K$ locations in $K+1$ stage.
In the first $K$ stages, the UAV sequentially collect data from  $K$ users.
The data collecting order is denoted by $\pi_1, \cdots, \pi_K$.
In the last $(K+1)$-th stage, the UAV flies back to the initial point and then it can start over again from the first user.}}


\subsection{Fly-Hover-Communicate Protocol}

\begin{figure}
\centering
\includegraphics[width=3.5in]{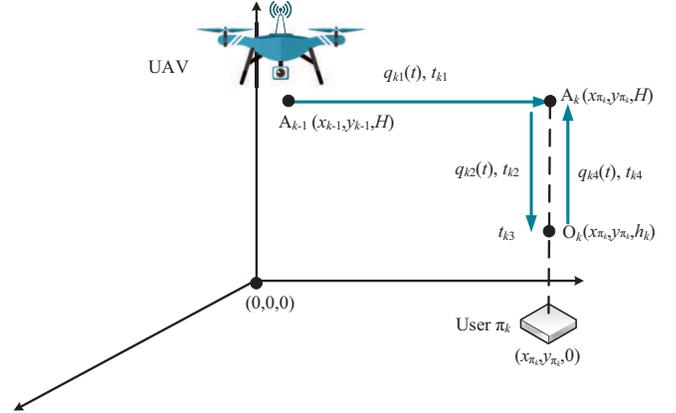}
\caption{Fly-hover-communicate protocol.}\label{sys1fig1}
\end{figure}

{\color{myc1}{The fly-hover-communicate protocol is adopted, i.e., the UAV first flies to a position close to the user and then the UAV broadcasts energy while the user transmits data.
In particular, the fly-hover-communicate protocol contains four steps when the UAV collects data from each user.

For the $k$-th $(k\leq K)$ stage, the UAV needs to collect data from user $\pi_k$.
In the first step, the UAV straight flies\footnote{\color{myc1}{The straight flight is considered because straight line is the shortest between two points. Thus, the straight flight is always energy saving for propulsion energy consumption.}} from the location A$_{k-1}$ $(x_{\pi_{k-1}},y_{\pi_{k-1}},H)$ to a  point A$_k$ $(x_{\pi_{k}},y_{\pi_{k}},H)$ with trajectory $q_{k1}(t)$ and time duration $t_{k1}$, as shown in Fig.~\ref{sys1fig1}.}}
In the second step,  to increase the energy harvesting efficiency, the UAV decreases its height to point O$_k$ $(x_{\pi_k}, y_{\pi_k}, h_k)$ with trajectory $q_{k2}(t)$ and time duration $t_{k2}$.
In the third step, the UAV staying stably at point O$_k$ communicates with the user with time duration $t_{k3}$.
In the fourth step, the UAV increases its hight from point O$_k$ to point A$_k$ at height $H$  with trajectory $q_{k4}(t)$ and  time duration $t_{k4}$.
Point A$_k$ is viewed as the initial point to gather data from user $\pi_{k+1}$.
Note that the UAV increases its height to $H$ because it is the allowed flying height to avoid collision with other UAVs or high buildings.
Since the trajectories $q_{k1}(t)$, $q_{k2}(t)$, and $q_{k4}(t)$ are one dimension, we can set the initial location as $q_{k1}(0)=q_{k2}(0)=q_{k4}(0)=0$ by choosing different reference points.

According to Appendix A, propulsion energy for straight flight in the first step can be expressed as $E_1(q_{k1}(t), t_{k1})$ defined in \eqref{apArweq12}.
For boundary constraints in $q_{k1}(t)$, we have
\begin{equation}
q_{k1}(t_{k1})=d_{\pi_{k-1}\pi_k},
v_{k1}(0)=v_{k1}(t_{k1})=0,
\end{equation}
where $d_{\pi_{k-1}\pi_k}=\sqrt{(x_{\pi_k}-x_{\pi_{k-1}})^2+(y_{\pi_k}-y_{\pi_{k-1}})^2}$ is the distance between A$_{k-1}$ and A$_{k}$, $v_{k1}(t)$ is the velocity at time $t$ and we denote $\pi_0=0$.
In the second step,  the vertical  descent energy   is $E_2(q_{k2}(t), t_{k2})$ defined in \eqref{apArweq23}.
For boundary constraints in $q_{k2}(t)$, we have
\begin{equation}
q_{k2}(t_{k2})=H-h_k, v_{k2}(0)=v_{k2}(t_{k2})=0.
\end{equation}
{\color{myc1}{ By substituting $q(t)=0$ and $T_0=t_{k3}$ into \eqref{apArweq12}, the hover energy consumption in the third step is $E_3(t_{k3})=(P_0+P_1)t_{k3}$.}}
In the fourth step, the vertical climb energy can be given by $E_4(q_{k4}(t), t_{k4})$ defined in \eqref{apArweq24}.
For boundary constraints in $q_{k4}(t)$, we also have
\begin{equation}
q_{k4}(t_{k4})=H-h_k, v_{k4}(0)=v_{k4}(t_{k4})=0.
\end{equation}


The total energy consumption of the UAV includes two components: the communication-related energy due to radiation, signal processing and other circuit, and the propulsion energy for ensuring the UAV to remain aloft.
In practice, the propulsion energy is much larger than the communication-related energy which can therefore be ignored \cite{7888557} in this paper.
Thus, the total energy consumption of the UAV in the $k$-th ($k\leq K$) stage  is
\begin{equation}\label{sys1sf1eq8_1}
E_k^{\text{Tot}} =\sum_{s\in\mathcal S}E_s(q_{ks}(t), t_{ks})+(P_0+P_1)t_{k3},
\end{equation}
where $\mathcal S=\{1,2,4\}$.

For the $(K+1)$-th stage, the UAV flies  from A$_K$ back to the initial point A$_0$ with trajectory $q_{(K+1)1}(t)$ and time duration $t_{(K+1)1}$.
As a result, the total energy consumption of the UAV in the $(K+1)$-th stage amounts to
\begin{align}\label{sys1sf1eq8_2}
E_{K+1}^{\text{Tot}}  = E_1(q_{(K+1)1}(t), t_{(K+1)1}).
\end{align}
For boundary constraints in $q_{K+1}(t)$, we have
\begin{equation}
q_{(K+1)1}(t_{K+1})=d_{ \pi_{K}\pi_{K+1}},
v_{K+1}(0)=v_{K+1}(t_{(K+1)1})=0,
\end{equation}
where we denote $\pi_{K+1}=0$.

For the third step in the $k$-th stage $(k\leq K)$ with staying stably at point O$_k$, the UAV broadcasts energy and receives information in HD or FD mode.

\subsubsection{HD Mode}
In HD mode, the UAV broadcasts energy to user $k$ with power $P$ in time duration $\rho_kt_{k3}$, and then user $k$ utilizes the harvested energy to upload a fixed amount of data to the UAV in time duration $(1-\rho_k)t_{k3}$, where $\rho_k\in[0,1]$ is the time splitting factor for user $\pi_k$.

{\color{myc1}{For UAV-ground links, large-scale attenuation is usually modelled as a random
variable depending on the occurrence probabilities of LoS and non-LoS.
The UAV is equipped with directional antenna.
The channel gain between the UAV at point O$_k$ and user $\pi_k$ is denoted by $g_{k}$, which can be expressed as \cite{7412759,6863654,zeng2019energy}
\begin{equation}\label{sys1sf1eqgk}
g_k=
\left\{ \begin{array}{ll}
\!\!\!   \sqrt{\beta_0 h_k^{-\alpha} } z_k,
  &   \textrm{with probability } a, \\
\!\!\!   \sqrt{\kappa\beta_0 h_k^{-\alpha} } z_k,
 &  \textrm{with probability } 1-a.\\
\end{array} \right.
\end{equation}
In \eqref{sys1sf1eqgk}, $\beta_0$ is  channel power gain at the reference distance 1~m, $\alpha$ is the path
loss exponent,  $\kappa$ is the additional attenuation factor
due to the non-LoS condition,  the small-scale fading
$z_k$ is  exponentially distributed with parameter 1,
and $a$ is the LoS probability between the UAV and user $\pi_k$.
In particular, $a$ can be given by \cite{7412759}
\begin{equation}\label{sys1sf1eqgk2}
a=\frac{1}{1+C_1(\exp(-C_2(90-C_1)))},
\end{equation}
where $C_1$ and $C_2$ are parameters depending on the propagation
environment, and $90$ is the elevation angle.}}
The expected channel power gain by averaging over
both randomness of small scaling fading and LoS occurrence is
\begin{equation}\label{sys1sf1eqgk2_2}
\mathbb E(|g_k|^2)=(a+\kappa(1-a)) \beta_0 h_k^{-\alpha}=b  h_k^{-\alpha},
\end{equation}
where we set $b=(a+\kappa(1-a))\beta_0$.

During the time duration of $\rho_k t_{k3}$, the harvested energy at user $\pi_k$ can be evaluated by \cite{6907966},
\begin{equation}\label{sys1sf1eq3}
e_{\pi_k}^{\text {HD}}=\zeta P \rho_k t_{k3} \mathbb E(|g_k|^2)=\zeta  P b  h_k^{-\alpha} \rho_k t_{k3},
\end{equation}
where $0<\zeta<1$ is the energy harvesting efficiency at user $k$.

During the time duration of $(1-\rho_k)t_{k3}$ for uplink data transfer, the average transmit power of user $\pi_k$   is $p_{\pi_k}$.
Then,
the transmitted data from  user $\pi_k$ to the UAV   can be accordingly given by
\begin{align}\label{sys1sf1eq6_2}
d_{\pi_k}^{\text {HD}} &= (1-\rho_k)t_{k3} B\log_2
\left( 1 + \frac {p_{k} |g_k|^2}{B \sigma^2}
\right),
\end{align}
where $B$ is the bandwidth of the system and $\sigma^2$ is the power spectral density of the additive white Gaussian noise.
Due to the randomness of $g_k$, $d_{k}^{\text {HD}}$ is a random variable.
According to the Jensen's inequality, we have
\begin{align}\label{sys1sf1eq6_3}
\mathbb E(d_{\pi_k}^{\text {HD}}) &\leq (1-\rho_k)t_{k3} B\log_2
\left( 1 + \frac {p_{\pi_k}  \mathbb E(|g_k|^2)}{B \sigma^2}
\right)
\nonumber\\
&= (1-\rho_k)
t_{k3} B\log_2
\left( 1 + \frac {p_{\pi_k} b  h_k^{-\alpha}}{B \sigma^2}
\right)\triangleq r_{\pi_k}^{\text{HD}}.
\end{align}
In the following, we use $r_{\pi_k}^{\text{HD}}\geq D_{\pi_k} $ to approximately represent the target data collection constraint, where $D_{\pi_k}$ is the minimal required data of user $\pi_k$.
Numerical results in \cite{zeng2019energy} show rather satisfactory accuracy for such approximation.

{\color{myc1}{Due to  energy harvesting in time duration $\rho_k t_{k3}$ and data transmission in duration $(1-\rho_k)t_{k3}$, the consumed energy at user $\pi_k$ is
\begin{equation}\label{sys1sf1eq3_1}
c_{\pi_k}^{\text {HD}}=\rho_k t_{k3} p_{\pi_k}^{\text{re}}+(1-\rho_k)t_{k3} \left(p_{\pi_k}^{\text{tr}}+ \frac{p_{\pi_k} }{\epsilon_{\pi_k}}\right),
\end{equation}
where $p_{\pi_k}^{\text{re}}$ is the receive circuit power consumption, $p_{\pi_k}^{\text{tr}}$ is the circuit power consumed for transmit signal processing, and $\epsilon_{\pi_k}\in(0,1]$ is a constant which accounts for the power
amplifier efficiency.}}

\subsubsection{FD Mode}
{\color{myc1}{In FD mode, the UAV broadcasts energy to user $\pi_k$ with power $P$, while user $\pi_k$ simultaneously uploads the data to the UAV in time duration $t_{k3}$ by using the harvested energy.
The transmitted data from user $\pi_k$ to the UAV in FD mode is
\begin{align}\label{sys1sf1eq8}
\mathbb E{(d_{\pi_k}^{\text{FD}})} &= \mathbb E\left((t_{k3}-\delta_{\pi_k}) B \log_2 \left(
1+\frac{p_{\pi_k}   |g_k|^2}{\gamma P +B\sigma^2}
\right)\right) \nonumber \\
&\leq  (t_{k3}-\delta_{\pi_k}) B \log_2 \left(
1+\frac{p_{\pi_k} b  h_k^{-\alpha}}{\gamma P +B\sigma^2}
\right)\triangleq r_{\pi_k}^{\text{FD}},
\end{align}
where $\gamma$ denotes the effective self-interference coefficient in FD operations and $\delta_{\pi_k}$ is the processing delay of the energy circuits of user $\pi_k$.
In time duration $t_{k3}$,
  the harvested energy at user $\pi_k$  can be evaluated by \cite{6907966},
\begin{equation}\label{sys1sf1eq3_2}
e_{\pi_k}^{\text {FD}}= \zeta  P b  h_k^{-\alpha}t_{k3},
\end{equation}
and the consumed energy at user $\pi_k$  is
\begin{equation}
c_{\pi_k}^{\text {FD}}=t_{k3}  p_{\pi_k}^{\text{re}}+ (t_{k3}-\delta_{\pi_k}) \left(p_{\pi_k}^{\text{tr}}+ \frac{p_{\pi_k} }{\epsilon_{\pi_k}}\right).
\end{equation}
\vspace{-3em}
}}

\subsection{Problem Formulation}
Now, it is ready to formulate the following total energy minimization problem with minimal user upload data demand as:
 \begin{subequations}\label{sys1min1}
\begin{align}
\mathop{\min}_{ \pmb \pi, \pmb q, \pmb t, {\color{myc1}{\pmb h}}, \pmb \rho, \pmb p   }\!\!\!&\:
\sum_{k=1}^{K+1} E_k^{\text{Tot}}  \\
\textrm{s.t.}\quad\:
 &r_{\pi_k}^{\text {HD}} \geq D_{\pi_k},  \forall k \in\mathcal K \\
& e_{\pi_k}^{\text {HD}} \geq c_{\pi_k}^{\text {HD}},   \forall k \in\mathcal K\\
& 0\leq \rho_k\leq 1,   \forall k \in\mathcal K\\
& q_{ks}(0)=q_{(K+1)1}(0)=0,   \forall k \in\mathcal K,s\in\mathcal S \\
& q_{k1}(t_{k1})=d_{\pi_{k-1}\pi_k},  \forall k \in\mathcal K' \\
&q_{ks}(t_{ks})=H-h_k, \forall k \in\mathcal K, s\!=\!2,4 \\
& v_{k1}(0)=v_{k1}(t_{k1})=0,  \forall k \in\mathcal K'\\
&v_{ks}(0)=v_{ks}(t_{ks})=0,  \forall k \in\mathcal K, s\!=\!2,4\\
&  v_{ks}(t)\!=\!\dot{q}_{ks}(t), a_{ks}(t)\!=\!\ddot{q}_{ks}(t),\! \forall k \!\in\!\mathcal K, \!s\!=\!2,4\\
& v_{k1}(t)\!=\!\dot{q}_{k1}(t), a_{k1}(t)\!=\!\ddot{q}_{k1}(t), \forall k \!\in\!\mathcal K'\\
&|v_{ks}(t)|\!\leq\! V_{\max},|a_{ks}(t)|\!\leq\! A_{\max}, \!\forall k \!\in\!\mathcal K', s\!\in\!\mathcal S\\
&\pmb \pi \in \Pi
\end{align}
\end{subequations}
for HD,
and
 \begin{subequations}\label{sys1min1_2}
\begin{align}
\mathop{\min}_{ \pmb \pi, \pmb q, \pmb t, {\color{myc1}{\pmb h}}, \pmb p   }\quad
&\sum_{l=1}^{K+1} E_l^{\text{Tot}}  \\
\textrm{s.t.}\quad
& r_{\pi_k}^{\text {FD}} \geq D_{\pi_k},  \forall k \in\mathcal K \\
& e_{\pi_k}^{\text {FD}} \geq c_{\pi_k}^{\text {FD}},   \forall k \in\mathcal K\\
& (\ref{sys1min1}e)-(\ref{sys1min1}m),
\end{align}
\end{subequations}
for FD,
where $\pmb \pi=\{\pi_k\}$,
$\pmb q=\{q_{k1}(t),q_{k2}(t), q_{k4}(t)\}$,
$\pmb t=\{t_{k1}, t_{k2}, t_{k3}, t_{k4} \}$,
$\pmb h=\{h_k\}$,
$\pmb \rho=\{\rho_k\}$,
$\pmb p =\{p_k \}$,
$\mathcal K'=\mathcal K \cup\{K+1\}$,
$V_{\max}$ and $A_{\max}$ are respectively the maximal velocity and acceleration of the UAV,
 and $\Pi$ is the set of all possible permutation of all $K$ users.
Constraints (\ref{sys1min1}b) or (\ref{sys1min1_2}b) ensure successful data collection, while
constraints (\ref{sys1min1}c) or (\ref{sys1min1_2}c) mean that the consumed energy of each user should be less than the harvested energy.
The time splitting factor constraints are given in  (\ref{sys1min1}d).
Constraints (\ref{sys1min1}e)-(\ref{sys1min1}i) are the boundary constraints for the trajectory.

Note that the trajectory from $A_k$ to $O_k$ always exists since the data from all users should be collected.
Moreover,  the discrete data collecting order variable $\pmb \pi$ only affects the boundary constraints of fist-step trajectory in each fly-hover-communicate stage.
This motivates us to decouple the energy minimization problem with two subproblems without loss of generality, i.e., path planning subproblem with variable ($\pmb \pi, \bar{\pmb q}=\{q_{k1}(t)\},
\bar{\pmb t}=\{t_{k1}\}$) and energy minimization subproblem with variable ($ \tilde{\pmb q}=\{q_{k2}(t), q_{k4}(t)\},
\tilde{\pmb t}=\{t_{k2},t_{k3},t_{k4}\}, \pmb h, \pmb \rho,\pmb p $) for HD mode and ($ \tilde{\pmb q},
\tilde{\pmb t}, \pmb h, \pmb p $) for FD mode.

\section{Path Planning}
In this section, the path planning problem is investigated.
According to \eqref{sys1sf1eq8_1}, \eqref{sys1sf1eq8_2}, problems (\ref{sys1min1}) and (\ref{sys1min1_2}), the path planning subproblem can be formulated as
 \begin{subequations}\label{pp1min1}
\begin{align}
\mathop{\min}_{ \pmb \pi, \bar{\pmb q}, \bar{\pmb t} } \quad
&\sum_{k=1}^{K+1} E_1(q_{k1}(t), t_{k1})  \\
\textrm{s.t.}\quad
& q_{k1}(0)=0, q_{k1}(t_{k1})=d_{\pi_{k-1}\pi_k},  \forall k \in\mathcal K' \\
& v_{k1}(0)=v_{k1}(t_{k1})=0,  \forall k \in\mathcal K'\\
& v_{k1}(t)\!=\!\dot{q}_{k1}(t), a_{k1}(t)\!=\!\ddot{q}_{k1}(t), \forall k \!\in\!\mathcal K'\\
&|v_{k1}(t)|\leq V_{\max},|a_{k1}(t)|\leq A_{\max}, \forall k \!\in\!\mathcal K'\\
&\pmb \pi \in \Pi.
\end{align}
\end{subequations}
Based on (\ref{pp1min1}b), it is observed that the trajectory is a function of $d_{\pi_k\pi_{k-1}}$.
Due to this observation, problem \eqref{pp1min1} is further equivalent to
 \begin{subequations}\label{pp1min1_1}
\begin{align}
\mathop{\min}_{ \pmb \pi } \quad
&\sum_{k=1}^{K+1} E_{\pi_{k-1}\pi_{k}}  \\
\textrm{s.t.}\quad
&\pmb \pi \in \Pi,
\end{align}
\end{subequations}
where
\begin{subequations}\label{pp1min1_2}
\begin{align}
E_{\pi_{k}\pi_{k-1}}=\mathop{\min}_{  q_{k1} (t),t_{k1} } \:
&  E_1(q_{k1}(t), t_{k1})  \\
\textrm{s.t.}\quad
&q_{k1}(0)=0, q_{k1}(t_{k1})=d_{\pi_{k-1}\pi_k}    \\
& v_{k1}(0)=v_{k1}(t_{k1})=0  \\
& v_{k1}(t) =\dot {q}_{k1}(t), a_{k1}(t) =\ddot{q}_{k1}(t)\\
&|v_{k1}(t)|\!\leq\! V_{\max},|a_{k1}(t)|\!\leq\! A_{\max}.
\end{align}
\end{subequations}
In \eqref{pp1min1_2}, $E_{\pi_{k-1}\pi_{k}}$ means the minimal propulsion energy consumption with flying distance $d_{\pi_{k-1}\pi_k}$.

{\color{myc1}{
For problem \eqref{pp1min1_2}, in order to construct a feasible solution, we consider two special cases:
1) the UAV first increases the speed from zero to a specific speed less than $V_{\max}$ with the maximal acceleration $a_{\max}$ in time duration $t_0$, and then the UAV decreases the speed to zero with acceleration $-a_{\max}$;
2) the UAV first increases the speed from zero to $V_{\max}$ with acceleration $a_{\max}$ in time duration $t_1$, then the UAV flies with the constant speed, finally, the UAV decreases its speed to zero  with acceleration $-a_{\max}$, as shown in Fig.~\ref{sys1fig3}.
Constraints (\ref{pp1min1_2}c)-(\ref{pp1min1_2}e) can be easily satisfied if the UAV trajectory is determined.
To satisfy constraint (\ref{pp1min1_2}b), we consider the following two situations, separately.

1) If $d_{\pi_{k-1}\pi_k}<\frac{V_{\max}^2}{a_{\max}}$, the UAV trajectory follows case 1 in  Fig.~\ref{sys1fig3}.
In this case, we have $ {a_{\max}t_{0}^2} =d_{\pi_{k-1}\pi_k}$ and $t_{k1}=2t_0$, i.e.,
\begin{equation}\label{pp1min1_2feq1}
t_{k1}=2\sqrt{\frac{d_{\pi_{k-1}\pi_k}}{a_{\max}}},
\end{equation}
which implies that a feasible solution of (\ref{pp1min1_2}) always exists.

2) If $d_{\pi_{k-1}\pi_k}\geq\frac{V_{\max}^2}{a_{\max}}$, the UAV trajectory follows case 2 in  Fig.~\ref{sys1fig3}.
In this case, we have $t_1=\frac{V_{\max}}{a_{\max}}$, $\frac{V_{\max}^2}{a_{\max}} +(t_2-t_1)V_{\max} =d_{\pi_{k-1}\pi_k}$ and $t_{k1}=t_1+t_2$, i.e.,
\begin{equation}\label{pp1min1_2feq2}
t_{k1}=  {\frac{d_{\pi_{k-1}\pi_k}}{V_{\max}}} + {\frac{V_{\max}}{a_{\max}}},
\end{equation}
which ensures a feasible solution of (\ref{pp1min1_2}).

\begin{figure}
\centering
\includegraphics[width=3.5in]{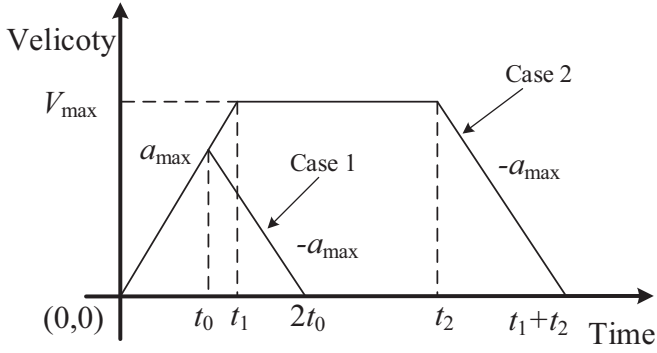}
\caption{The velicoty versus time.}\label{sys1fig3}
\end{figure}

\subsection{Solution for Problem \eqref{pp1min1_1}}
To rewrite problem \eqref{pp1min1_1} in a simplifier manner, we introduce new variable $w_{kl}$ to represent that the UAV collects data from user $l$ in the $k$-th stage.
With new variable $\pmb w=\{w_{kl}\}$, problem \eqref{pp1min1_1}  can be equivalently transformed to
\begin{subequations}\label{pp1min2_1}
\begin{align}
\mathop{\min}_{ \pmb w}\quad
& \sum_{l=1}^K w_{1l} E_{0l} + \sum_{k=2}^{K} \sum_{l=1}^K  \sum_{i=1}^K w_{(k-1)l} w_{ki} E_{li}
\nonumber\\ & +\sum_{l=1}^K w_{Kl} E_{l0} \\
\textrm{s.t.}\quad
& \sum_{k=1}^K w_{kl}=1,   \forall l\in\mathcal  K\\
& \sum_{l=1}^K w_{kl}=1,   \forall k\in\mathcal K\\
& w_{kl}\in\{0,1\},  \forall k,l\in\mathcal K.
\end{align}
\end{subequations}
In the objective function (\ref{pp1min2_1}a),  $\sum_{l=1}^K w_{1l} E_{0l}$ means the energy consumption in the first stage.
For $2\leq k\leq K$, the term $w_{(k-1)l} w_{ki} E_{li}$ means the energy consumption in the $k$-th stage when the UAV visits user $l$ in the $(k-1)$-th stage and visits user $i$ in the $k$-th stage.
Summing all possible cases, $\sum_{l=1}^K  \sum_{i=1}^K w_{(k-1)l} w_{ki} E_{li} $ means the energy consumption in the $k$-th stage.
The last term in (\ref{pp1min2_1}a), $\sum_{l=1}^K w_{Kl} E_{l0}$ stands for the energy consumption in the last $(K+1)$-th stage since the UAV flies back to the initial point in this stage.
Constraints  (\ref{pp1min2_1}b) represent that  each user is visited at once, while constraints (\ref{pp1min2_1}c) show that only one user is visited at each stage.

Problem \eqref{pp1min2_1} is a non-linear integer problem due to non-linear term $w_{(k-1)l} w_{ki}$ in the objective function.
To transform problem \eqref{pp1min2_1} into an equivalent solvable form, we introduce new variable $v_{kli}=w_{(k-1)l} w_{ki}$.
Due to the fact that $w_{kl}\in\{0,1\}$,   constraint $v_{kli}=w_{(k-1)l} w_{ki}$ is equivalent to
\begin{equation}\label{pp1min2_1eq1}
v_{kli}\geq w_{(k-1)l}+ w_{ki}-1,v_{kli}\geq 0,
\end{equation}
and
\begin{equation}\label{pp1min2_1eq2}
v_{kli}\leq w_{(k-1)l} , v_{kli}\leq w_{ki},
\end{equation}
for all $k=\mathcal K\setminus\{1\}$, $l,i\in\mathcal K$.

With new variable $\pmb v=\{v_{kli}\}$, problem \eqref{pp1min2_1} is equivalent to the following integer problem
\begin{subequations}\label{pp1min2_12}
\begin{align}
\mathop{\min}_{ \pmb w, \pmb v}\quad
& \sum_{l=1}^K w_{1l} E_{0l} + \sum_{k=2}^{K} \sum_{l=1}^K  \sum_{i=1}^K v_{kli}^2 E_{li}
\nonumber\\ & +\sum_{l=1}^K w_{Kl} E_{l0} \\
\textrm{s.t.}\quad
&  (\ref{pp1min2_1}b)-(\ref{pp1min2_1}d), \eqref{pp1min2_1eq1},\eqref{pp1min2_1eq2}.
\end{align}
\end{subequations}
Note that we use $v_{kli}^2$ to replace $v_{kli}$ in the objective function (\ref{pp1min2_12}a) because $v_{kli}$ is always 0 or 1 according to \eqref{pp1min2_1eq1} and \eqref{pp1min2_1eq2}.
The advantage of transforming problem \eqref{pp1min2_1} into problem \eqref{pp1min2_12} is that problem \eqref{pp1min2_12} with relaxed constraints is convex.

Due to integer constraints (\ref{pp1min2_1}d), it is hard to solve   problem \eqref{pp1min2_12}.
By temporarily relaxing the integer constraints (\ref{pp1min2_1}d) with $w_{kl}\in[0,1]$, problem \eqref{pp1min2_12} is a convex problem.
For convex problem \eqref{pp1min2_12} with relaxed constraints, the optimal solution can be effectively obtained by using the dual method \cite{boyd2004convex,8088359}.
We show that it fortunately gets integer solutions, which preserves both optimality and feasibility of the original problem even though
the relaxation is deployed temporarily.

To obtain the optimal solution of problem (\ref{pp1min2_12}), we have the following theorem.

\begin{theorem}
For problem (\ref{pp1min2_12}), the optimal path planning $\pmb w$ and auxiliary vector $\pmb v$ can be respectively expressed as
\begin{equation}\label{PAuser2eq2_2}
w_{kl}^*=\left\{ \begin{array}{ll}
\!\!1, &\text{if}\; k =\arg\min_{i\in\mathcal K} C_{il}  \\
\!\!0, &\text{otherwise},
\end{array} \right.
\end{equation}
and
\begin{equation}\label{PAuser2eq2_6_2}
v_{kli}^*=\left[\frac{\gamma_{kli} -\lambda_{kli} -\mu_{kli}}{2E_{li}}\right]^+,
\end{equation}
where
\begin{equation}\label{PAuser2eq2_2_2}
 C_{kl}\!=\!\left\{ \begin{array}{l}
\!\!\!E_{0l}+\beta_1+\sum_{i=1}^K(\gamma_{2li}-\lambda_{2li}), \quad  \text{if}\; k =1\\
\!\!\!\beta_k\!+\!\sum_{i=1}^K(\gamma_{(k+1)li}\!+\!\gamma_{kil}\!-\!\lambda_{(k+1)li}\!-\!\mu_{kil}),  \\
 \qquad\qquad\qquad\qquad\qquad\text{if}\; 2\leq  k \leq K-1\\
\!\!\!E_{l0}+\sum_{i=1}^K(\gamma_{Kil}-\mu_{Kil}),  \quad\text{if}\; k =K.
\end{array} \right.
\end{equation}
$\{\beta_k\}, \{\gamma_{kli}\},\{\lambda_{kli}\}, \{\mu_{kli}\}$ are  Lagrange multipliers associated with corresponding constraints of problem (\ref{pp1min2_12}),
and $[x]^+=\max\{x,0\}$.
If there are multiple minimal points in $\arg\min_{i\in\mathcal K} C_{il}$, we will choose any one of them.
\end{theorem}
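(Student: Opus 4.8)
The plan is to exploit that problem \eqref{pp1min2_12}, once the integrality requirement (\ref{pp1min2_1}d) is relaxed to $w_{kl}\in[0,1]$, is convex: the objective is a sum of separable strictly convex terms $v_{kli}^2 E_{li}$ (with $E_{li}>0$) plus a function that is \emph{affine} in $\pmb w$, and all remaining constraints are affine. Hence the Karush--Kuhn--Tucker (KKT) conditions are both necessary and sufficient for optimality, and producing the stated closed forms reduces to writing them out. First I would form the Lagrangian $L$, attaching multipliers $\{\beta_k\}$ to the per-stage equalities (\ref{pp1min2_1}c) and a second family $\{\alpha_l\}$ to the per-user equalities (\ref{pp1min2_1}b), together with $\{\gamma_{kli}\}$, $\{\lambda_{kli}\}$, $\{\mu_{kli}\}$ for the coupling inequalities \eqref{pp1min2_1eq1}--\eqref{pp1min2_1eq2} and an auxiliary multiplier $\{\nu_{kli}\}$ for $v_{kli}\ge 0$.

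Next I would obtain \eqref{PAuser2eq2_6_2} from stationarity in $\pmb v$. Because each $v_{kli}$ enters only its own quadratic term and the three coupling constraints, $\partial L/\partial v_{kli}=2E_{li}v_{kli}-\gamma_{kli}+\lambda_{kli}+\mu_{kli}-\nu_{kli}=0$. Eliminating $\nu_{kli}$ via its complementary slackness $\nu_{kli}v_{kli}=0$ together with $v_{kli}\ge 0$ gives exactly the projected form $v_{kli}^*=[(\gamma_{kli}-\lambda_{kli}-\mu_{kli})/(2E_{li})]^+$: when the numerator is nonnegative the constraint $v_{kli}\ge0$ is inactive, and otherwise the minimizer is pushed to the boundary $v_{kli}^*=0$.

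Then I would recover the assignment rule \eqref{PAuser2eq2_2}. Since $L$ is affine in $\pmb w$, the coefficient $C_{kl}$ of each $w_{kl}$ is its reduced cost, collected from the objective contribution (present only for $k=1$ and $k=K$, through $E_{0l}$ and $E_{l0}$), the stage multiplier $\beta_k$, and the coupling terms. The only delicate bookkeeping is the index shift: $w_{kl}$ appears as $w_{(k-1)l}$ in the stage-$(k+1)$ constraints, contributing $\sum_i(\gamma_{(k+1)li}-\lambda_{(k+1)li})$, and as $w_{ki}$ in the stage-$k$ constraints, contributing $\sum_i(\gamma_{kil}-\mu_{kil})$, with the constraint range $k=2,\dots,K$ removing one of the two pieces at each endpoint; this reproduces \eqref{PAuser2eq2_2_2}. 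Crucially, the per-user multiplier $\alpha_l$ enters $C_{kl}$ as the \emph{same} additive constant for every stage $k$ at fixed $l$, so it cancels in the comparison $\min_{i}C_{il}$ and is omitted. Minimizing an affine function over the relaxed assignment polytope then places all of user $l$'s mass on the stage attaining $\min_{i}C_{il}$, which is \eqref{PAuser2eq2_2}, the stated rule covering ties arbitrarily.

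The main obstacle is feasibility and integrality: the per-user argmin rule \eqref{PAuser2eq2_2} must yield a genuine permutation, i.e. the resulting $\pmb w$ must also satisfy $\sum_l w_{kl}=1$ so that no two users land in the same stage. I would argue this is enforced by the optimal stage multipliers $\{\beta_k\}$, which act as prices equalizing the per-stage load, precisely as in the dual of a linear assignment problem. The constraint matrix of (\ref{pp1min2_1}b)--(\ref{pp1min2_1}c) is totally unimodular, so the relaxed polytope has only $0$--$1$ vertices and the relaxation is tight; moreover one of the two equality families is redundant, giving a gauge freedom that may be used to normalize a single $\beta_k$ (e.g. $\beta_K$), which explains the asymmetric form of \eqref{PAuser2eq2_2_2}. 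Invoking sufficiency of the KKT conditions for the convex relaxation then shows that the pair $(\pmb w^*,\pmb v^*)$ built from a dual optimum is globally optimal and integral, hence optimal for the original problem \eqref{pp1min2_12}.
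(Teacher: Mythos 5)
Your proposal is correct in substance and follows the same Lagrangian-dual/KKT route as the paper; the differences lie in how much gets dualized and in what is actually established. The paper's Appendix B performs a \emph{partial} dualization: only the per-stage equalities (\ref{pp1min2_1}c) and the coupling inequalities \eqref{pp1min2_1eq1}--\eqref{pp1min2_1eq2} are priced into the Lagrangian \eqref{dueq2}, while the per-user constraints (\ref{pp1min2_1}b), the box $w_{kl}\in[0,1]$, and $v_{kli}\geq 0$ are kept explicitly in the inner problem \eqref{dueq1_2}; hence there is no $\alpha_l$ and no $\nu_{kli}$, the assignment rule \eqref{PAuser2eq2_2} is immediate from minimizing a linear function over per-user simplices, and the projection in \eqref{PAuser2eq2_6_2} replaces complementary slackness. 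Your full dualization reaches the same formulas but needs one repair: once (\ref{pp1min2_1}b) has been dualized you cannot \emph{also} minimize ``over the relaxed assignment polytope''---the inner problem is then a box minimization, which yields a sign/threshold rule rather than an argmin rule; you must either keep (\ref{pp1min2_1}b) explicit (as the paper does) or recover \eqref{PAuser2eq2_2} from complementary slackness of the box multipliers, noting that $w_{kl}^*=1$ forces $C_{kl}+\alpha_l\leq 0\leq C_{il}+\alpha_l$ for all $i$, whence $k\in\arg\min_{i\in\mathcal K} C_{il}$ after $\alpha_l$ cancels. Also, ``one of the two equality families is redundant'' overstates the degeneracy: the assignment system has exactly one redundant constraint, i.e., a one-dimensional gauge, which is precisely enough to normalize a single multiplier such as $\beta_K=0$; this is the right way to explain why $\beta_K$ is absent from the $k=K$ branch of \eqref{PAuser2eq2_2_2}, since by the paper's own Lagrangian \eqref{dueq2} it would otherwise appear there. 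Where you go beyond the paper is the final step: the paper's proof stops at the per-multiplier minimizers \eqref{PAuser2eq2_2} and \eqref{PAuser2eq2_6_2} and never argues that the recovered $\pmb w$ is a permutation or that the relaxation is tight (integrality is merely asserted in the text after Algorithm 1), whereas your total-unimodularity and pricing argument for the $\beta_k$ supplies exactly that missing primal-recovery justification.
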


\begin{proof}
Please refer to Appendix B.
\end{proof}

The values of $\{\beta_k\}, \{\gamma_{kli}\},\{\lambda_{kli}\}, \{\mu_{kli}\}$ can be determined by the sub-gradient method \cite{bertsekas2009convex}.
The updating procedure is given by
\begin{eqnarray}
&&\!\!\!\!\!\!\!\!\!\!\!\!\!\!\!\!\!\!
\beta_k= \beta_k+ \phi   \left(\sum_{l=1}^K w_{kl}-1\right)\label{PAuser2eq2_3}\\
&&\!\!\!\!\!\!\!\!\!\!\!\!\!\!\!\!\!\!
\gamma_{kli} =\left[\gamma_{kli}  + \phi ( w_{(k-1)l}+ w_{ki}-1- v_{kli} )  \right]^+\\
&&\!\!\!\!\!\!\!\!\!\!\!\!\!\!\!\!\!\!
\lambda_{kli}=\left[\lambda_{kli}+ \phi(v_{kli}-w_{(k-1)l})\right]^+\\
&&\!\!\!\!\!\!\!\!\!\!\!\!\!\!\!\!\!\!
\mu_{kli} =\left[ \mu_{kli} + \phi(v_{kli}- w_{ki})\right]^+ ,\label{PAuser2eq2_5}
\end{eqnarray}
where $\phi>0$ is a dynamically chosen step-size sequence.

By iteratively optimizing primal variable and  dual variable, the optimal path planning is obtained.
The dual method used to obtain the optimal path planning is given in Algorithm 1.
Notice that the optimal $w_{kl}$ is either 0 or 1 according to \eqref{PAuser2eq2_2}, even though
we relax $w_{kl}$ as continuous constraint in (\ref{pp1min2_12}). Therefore, we
can obtain optimal solution to problem (\ref{pp1min2_12}).

 \begin{algorithm}[h]
\caption{  Dual Method for Problem (\ref{pp1min2_12}) }
\begin{algorithmic}[1]
\STATE Initialize dual variables $\{\beta_k\}, \{\gamma_{kli}\},\{\lambda_{kli}\}, \{\mu_{kli}\}$.
\REPEAT
\STATE Update path planning variable $\pmb w$ and auxiliary vector $\pmb v$  according to (\ref{PAuser2eq2_2})-(\ref{PAuser2eq2_2_2}).
\STATE Update dual variables based on (\ref{PAuser2eq2_3})-(\ref{PAuser2eq2_5}).
\UNTIL the objective value (\ref{pp1min2_12}a) converges
\end{algorithmic}
\end{algorithm}

For the complexity, according to Algorithm 1, the major complexity lies in updating $\pmb v$.
Based on \eqref{PAuser2eq2_6_2}, the complexity of calculating $\pmb v$ is
$\mathcal O(K^3)$.
As a result, the total complexity of Algorithm 1 is $\mathcal O(LK^3)$, where $L$ is the number of iterations in Algorithm 1.

\subsection{Solution for Problem \eqref{pp1min1_2}}

Problem \eqref{pp1min1_2} is difficult to be directly solved due to the following two difficulties.
The first difficulty is the unknown fly time $t_{k1}$.
The second difficulty is that problem \eqref{pp1min1_2} involves
 an infinite number of optimization variables in continuous-function $q_{k1}(t)$.

To handle the first difficulty, we provide the following lemma regarding the property about the optimal fly time.
Denote the optimal solution of problem \eqref{pp1min1_2} by $(q_{k1}^*(t),t_{k1}^*)$.
\begin{lemma}
With fixed fly time $t_{k1}>t_{k1}^*$, the optimal trajectory of problem \eqref{pp1min1_2} is
\begin{equation}\label{pp1min1_2eq1}
\bar q_{kl}(t)=\left\{ \begin{array}{ll}
\!\!q_{kl}^*(t), &\text{if}\; 0\leq t \leq  t_{k1}^* \\
\!\!q_{kl}^*( t_{k1}^*), &\text{if}\;  t_{k1}^* \leq t \leq t_{k1}.
\end{array} \right.
\end{equation}
\end{lemma}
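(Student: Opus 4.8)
The plan is to establish the lemma in three stages. First, I would verify that the candidate trajectory $\bar q_{k1}$ is feasible for \eqref{pp1min1_2} once the duration is fixed at $t_{k1}>t_{k1}^*$. Second, I would evaluate its propulsion energy and show that it equals $E_1(q_{k1}^*(t),t_{k1}^*)+(P_0+P_1)(t_{k1}-t_{k1}^*)$. Third, I would argue by contradiction that no competing trajectory of the same duration can do strictly better, reducing any hypothetical improvement to a violation of the global optimality of $(q_{k1}^*(t),t_{k1}^*)$ in \eqref{pp1min1_2}.

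For feasibility I would check the boundary data directly. On $[0,t_{k1}^*]$ the segment coincides with $q_{k1}^*(t)$, so $\bar q_{k1}(0)=0$, $\bar q_{k1}(t_{k1}^*)=d_{\pi_{k-1}\pi_k}$, and the bounds $|v_{k1}|\le V_{\max}$, $|a_{k1}|\le A_{\max}$ all hold because $q_{k1}^*$ is itself feasible. On the surplus interval $[t_{k1}^*,t_{k1}]$ the UAV stays at the destination, so its position is constant at $d_{\pi_{k-1}\pi_k}$ while its velocity and acceleration vanish identically; the junction is consistent because $v_{k1}^*(t_{k1}^*)=0$, so $\bar q_{k1}$ has continuous velocity and meets every constraint of \eqref{pp1min1_2}. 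The energy value then follows from the additive (integral-of-power) structure of $E_1$ in \eqref{apArweq12}: the motion part contributes exactly $E_1(q_{k1}^*(t),t_{k1}^*)$, while substituting the stationary segment into \eqref{apArweq12} reproduces the hover power, giving the surplus contribution $(P_0+P_1)(t_{k1}-t_{k1}^*)$, exactly as in the hover computation of \eqref{sys1sf1eq8_1}.

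For optimality I would decompose the energy of an arbitrary feasible trajectory $\tilde q$ of duration $t_{k1}$ into a hover baseline $(P_0+P_1)t_{k1}$ plus a nonnegative ``cost of motion'' $\int_0^{t_{k1}}\big(p(v,a)-(P_0+P_1)\big)\,dt$, where $p(v,a)$ denotes the instantaneous propulsion power read off from \eqref{apArweq12}. Since $\bar q_{k1}$ incurs a motion cost only on $[0,t_{k1}^*]$ and zero surplus cost while hovering, the lemma is equivalent to the claim that the cost of driving the UAV from rest at the origin to rest at $d_{\pi_{k-1}\pi_k}$ cannot be made smaller than the value achieved by $q_{k1}^*$, regardless of the duration allotted. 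I would attempt the contradiction by taking a strictly cheaper $\tilde q$, excising a stationary (hence motion-cost-free) portion so as to recover the shorter horizon $t_{k1}^*$, and thereby producing a feasible competitor for \eqref{pp1min1_2} with energy below $E_1(q_{k1}^*(t),t_{k1}^*)$.

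The hard part will be precisely this last reduction, and it is where the explicit form of $E_1$ in \eqref{apArweq12} becomes indispensable. A general $\tilde q$ need not contain a stationary sub-interval to delete, so the argument really requires the stronger structural fact that hovering is the cheapest possible use of the surplus time — i.e.\ that the minimal energy for covering the fixed distance is nondecreasing in the allotted duration for $t_{k1}\ge t_{k1}^*$, so that reparametrizing the destination-reaching motion over a longer horizon cannot lower its cost below that of $q_{k1}^*$. Establishing this monotonicity, rather than merely exhibiting $\bar q_{k1}$ as one feasible completion, is the crux I expect to consume most of the effort; I would handle it by exploiting the convexity and monotonicity of $p(\cdot)$ in \eqref{apArweq12} together with the zero-velocity boundary conditions, which pin down how the per-unit-time power must behave once the flight is stretched beyond $t_{k1}^*$.
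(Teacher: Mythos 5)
Your stages 1 and 2 (feasibility of $\bar q_{k1}$ and the evaluation $E_1(\bar q_{k1}(t),t_{k1})=E_1(q_{k1}^*(t),t_{k1}^*)+(P_0+P_1)(t_{k1}-t_{k1}^*)$) are correct, and up to that point you are on the paper's own route: the paper's entire proof consists of the assertion, read off from \eqref{apArweq7}, that propulsion power at any positive velocity strictly exceeds the hover power $P_0+P_1$, combined with the optimality of $(q_{k1}^*(t),t_{k1}^*)$; your ``hover baseline plus nonnegative cost of motion'' decomposition is exactly that assertion. The problem is stage 3, which you yourself flag as the crux and never carry out. As you correctly observe, the pointwise bound plus optimality of $(q_{k1}^*(t),t_{k1}^*)$ only gives, for a competitor $\tilde q$ of duration $t_{k1}$, the inequality $\int_0^{t_{k1}}\bigl(p(\tilde v,\tilde a)-(P_0+P_1)\bigr)\,dt\ \ge\ \int_0^{t_{k1}^*}\bigl(p(v^*,a^*)-(P_0+P_1)\bigr)\,dt-(P_0+P_1)(t_{k1}-t_{k1}^*)$, which is strictly weaker than what the lemma claims. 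The deferred ``horizon-monotonicity of the minimal motion cost'' is not a finishing touch; it is the entire content of the lemma, so the proposal as written has a genuine gap (one that the paper's own two-sentence proof silently jumps over as well).

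Moreover, the tools you nominate to close the gap would fail. First, the power function \eqref{apArweq7} is neither convex nor monotone in speed: the induced-power factor $\sqrt{\sqrt{1+(c_2v^2)^2+c_4^2v^4}-c_4v^2}$ decreases with $v$, and with the parameters of Table~I one computes $P_{\text{f}}(10~\text{m/s},0)\approx 126$~W, well below $P_0+P_1\approx 168$~W; so the integrand of your ``cost of motion'' is negative at moderate speeds, and the nonnegativity premise (which the paper also asserts) does not actually follow from \eqref{apArweq7}. Second, even granting the pointwise bound, convexity and monotonicity of the power cannot deliver the monotonicity you need: for the convex, even, monotone-in-$|v|$ model $p=(P_0+P_1)+v^2$, Cauchy--Schwarz shows the fixed-horizon optimum is essentially constant speed $d_{\pi_{k-1}\pi_k}/t_{k1}$ with motion cost $d_{\pi_{k-1}\pi_k}^2/t_{k1}$, strictly decreasing in $t_{k1}$, so ``fly $q_{k1}^*$ then hover'' is strictly beaten for every $t_{k1}>t_{k1}^*$. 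What the lemma really requires is a kink of $\min_q E_1(q,T)$ as a function of $T$ at $T=t_{k1}^*$ (right derivative equal to $P_0+P_1$ at an interior minimum), a structural property that neither your plan nor the paper's argument establishes; no proof resting only on convexity/monotonicity of $p$ can produce it.
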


\begin{proof}
According to \eqref{apArweq7}, it is observed that the propulsion power of the UAV with positive velocity is strictly larger than the propulsion power $P_0+P_1$ with zero velocity  (i.e., the UAV is stable).
Since $q_{k1}^*(t)$ is the optimal solution for problem \eqref{pp1min1_2} with fixed optimal fly $t_{k1}^*$, the optimal $\bar q_{kl}(t)$ of roblem \eqref{pp1min1_2} with $t_{k1}>t_{k1}^*$ should be given by \eqref{pp1min1_2eq1}.
\end{proof}

Based on Lemma 1, it is observed that the optimal solution of problem \eqref{pp1min1_2} can be constructed by solving problem \eqref{pp1min1_2} even with  fixed fly time ($t_{k1}>t_{k1}^*$).
For the upper bound of the optimal fly time, the following lemma is provided.

\begin{lemma}
Given a feasible solution of problem \eqref{pp1min1_2} with objective value $E$,  the optimal fly time $t_{k1}^*$ of problem \eqref{pp1min1_2} satisfies $t_{k1}^*<\frac{E}{P_0+P_1}$.
\end{lemma}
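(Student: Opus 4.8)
The plan is to exploit the strict lower bound on the instantaneous propulsion power that is already established in the proof of Lemma~1: by \eqref{apArweq7}, the propulsion power of the UAV is at least $P_0+P_1$ at every instant, with equality holding only when the velocity vanishes. Writing $E^\ast = E_1(q_{k1}^\ast(t),t_{k1}^\ast)$ for the optimal objective value of problem \eqref{pp1min1_2}, I would express this energy as the time integral of the instantaneous propulsion power of the optimal trajectory over $[0,t_{k1}^\ast]$, and then bound it from below by integrating the constant hovering power $P_0+P_1$. This immediately yields the non-strict estimate $E^\ast \geq (P_0+P_1)t_{k1}^\ast$.

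To upgrade this to the strict inequality claimed, I would argue that the UAV cannot remain at zero velocity throughout the whole interval. The boundary constraints (\ref{pp1min1_2}b) force $\int_0^{t_{k1}^\ast} v_{k1}^\ast(t)\,\mathrm{d}t = d_{\pi_{k-1}\pi_k} > 0$, so $v_{k1}^\ast(t)$ is nonzero on a set of positive measure. On that set the instantaneous propulsion power is strictly greater than $P_0+P_1$, which makes the integral inequality strict: $E^\ast > (P_0+P_1)t_{k1}^\ast$, equivalently $t_{k1}^\ast < E^\ast/(P_0+P_1)$. Finally, since $(q_{k1}^\ast(t),t_{k1}^\ast)$ solves \eqref{pp1min1_2} while $E$ is the objective value of an arbitrary feasible solution, optimality gives $E^\ast \leq E$, and therefore $t_{k1}^\ast < E^\ast/(P_0+P_1) \leq E/(P_0+P_1)$, which is the desired bound.

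I expect the only delicate point to be justifying the strictness of the power integral, i.e.\ ruling out the degenerate possibility that the optimal velocity is zero almost everywhere; this is handled cleanly by the displacement constraint $d_{\pi_{k-1}\pi_k}>0$, which guarantees genuine motion and hence a set of positive measure on which the power exceeds $P_0+P_1$. The remaining steps — lower-bounding the energy by the hovering power and invoking optimality to replace $E^\ast$ by $E$ — are routine once the power bound from Lemma~1 is in hand.
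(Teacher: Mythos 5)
Your proposal is correct and follows essentially the same route as the paper: lower-bound the optimal energy by the hovering power $P_0+P_1$ integrated over $[0,t_{k1}^*]$ (strictly, via \eqref{apArweq7}), then invoke $E^*\leq E$ by optimality. The only difference is that you explicitly justify the strictness of the inequality through the positive-displacement/positive-measure argument, a detail the paper leaves implicit.
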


\begin{proof}
Denote $E^*$ as the optimal objective value of problem \eqref{pp1min1_2}.
Then, we have $E^* \leq E$.
Due to the fact that the propulsion power of the UAV with positive velocity is strictly larger than   $P_0+P_1$ according to \eqref{apArweq7},
we can obtain that $E^*> (P_0+P_1) t_{k1}^*$.
Thus, we have $t_{k1}^*<\frac{E}{P_0+P_1}$.
\end{proof}

Based on Lemma 1, problem \eqref{pp1min1_2} can be effectively solved without loss of optimality even with fixed fly time obtained from Lemma 2.

To handle the second difficulty, problem \eqref{pp1min1_2} can be reformulated with applying discrete linear state-space approximation.
By discretizing the time duration $t_{k1}$ into $N+1$ slots with step size $\delta=\frac{t_{k1}}{N}$, i.e., $t=n\delta$, $n=0,1,\cdots,N$, the UAV trajectory $q_{k1}(t)$ can be characterized by the discrete-time UAV location $\{q_{k1n}\}$.
For continuous constraints (\ref{pp1min1_2}d) with small step size $\delta$, we have the following equation about the velocity $\{v_{k1n}\}$ and acceleration $\{a_{k1n}\}$
\begin{equation}\label{pp1min1_2eq2}
v_{k1n}=\frac{q_{k1n}-q_{k1(n-1)}}{\delta}, \forall n \in\mathcal N
\end{equation}
\begin{equation}\label{pp1min1_2eq2_1}
a_{k1n}=\frac{v_{k1n}-v_{k1(n-1)}}{\delta}, \forall n \in \mathcal N,
\end{equation}
where $\mathcal N=\{1,\cdots,N\}$.
The maximal velocity and acceleration constraints (\ref{pp1min1_2}e) can be accordingly presented as
\begin{equation}\label{pp1min1_2eq3}
|v_{k1n}|\!\leq\! V_{\max},|a_{k1n}|\!\leq\! A_{\max}, \forall n\in\mathcal N.
\end{equation}

As a result, problem \eqref{pp1min1_2} can be expressed in the discrete form as
\begin{subequations}\label{pp1min1_2_1}
\begin{align}
\mathop{\min}_{  \{q_{k1n}\},\{v_{k1n}\},\{a_{k1n}\}}
&  \bar E_1(\{q_{k1n}\},\{v_{k1n}\},\{a_{k1n}\} )  \\
\textrm{s.t.}\quad
& q_{k10}=0,q_{k1N}=d_{\pi_{k-1}\pi_k}    \\
& v_{k10}=v_{k1N}=0  \\
&  (\ref{pp1min1_2eq2})-(\ref{pp1min1_2eq3}),
\end{align}
\end{subequations}
where
\begin{align}\label{pp1min1_2_1eq1}
 &\bar E_1(\{q_{k1n}\},\{v_{k1n}\},\{a_{k1n}\} )=\sum_{n=1}^N \Bigg[P_0 \left(1+ c_1 v_{k1n}^2 \right)  \nonumber\\& +  P_1  \sqrt{  1+\left(c_2  v_{k1n}^2+c_3 a_{k1n} v_{k1n}   \right)^2  }   \nonumber\\&
 \times \sqrt{ \sqrt{ 1+\left(c_2  v_{k1n} ^2+c_3  a_{k1n} v_{k1n} \right)^2  + {c_4^2 v_{k1n}^4} } - c_4  v_{k1n}^2   }   \nonumber\\& +  c_5   v_{k1n} ^3\Bigg].
\end{align}
Equation \eqref{pp1min1_2_1eq1} is derived from \eqref{apArweq12}.

Problem (\ref{pp1min1_2_1}) is non-convex  due to the objective function
\eqref{pp1min1_2_1eq1}.
It is generally hard to find the globally optimal solution of non-convex problem \eqref{pp1min1_2_1}.
In the following, we use the successive convex approximation (SCA) technique to obtain a suboptimal solution.

The second term in \eqref{pp1min1_2_1eq1} is non-convex.
To handle this issue, we introduce slack variables $A_{n}$ and $B_n$ such that
 \begin{align}\label{pp1min1_2_1eq2}
A_n= c_2  v_{k1n} ^2+c_3  a_{k1n} v_{k1n},
\end{align}
and
\begin{align}\label{pp1min1_2_1eq3}
B_n= \sqrt{ \sqrt{ 1+A_n^2  + {c_4^2 v_{k1n}^4} } + c_4  v_{k1n}^2   }.
\end{align}
Therefore, the second term in \eqref{pp1min1_2_1eq1} can be replaced by the convex expression $\frac{1+A_n^2}{B_n}$.
With the above manipulations,  problem \eqref{pp1min1_2_1} can be equivalent to
\begin{subequations}\label{pp1min1_2_2}
\begin{align}
\mathop{\min}_{ \substack{ \{q_{k1n}\},  \{v_{k1n}\}, \\ \{a_{k1n}\},\{A_n\},\{B_n\}}}
&  \sum_{n=1}^N \!\Bigg[\!P_0\! \left(\!1\!+\! c_1 v_{k1n}^2\! \right)   \! + \!\frac{ P_1(1\!+\!A_n^2) }{B_n}\! + \! c_5   v_{k1n} ^3\!\Bigg]
  \\
\textrm{s.t.}\qquad \:\:
& A_n\geq c_2  v_{k1n} ^2+c_3  a_{k1n} v_{k1n},   \forall n \in \mathcal N    \\
& B_n^2 \leq { \sqrt{ 1\!+\!A_n^2  \!+\! {c_4^2 v_{k1n}^4} } \!+ \! c_4  v_{k1n}^2   }, \forall n \in \mathcal N
 \\
& B_n\geq 0, \forall n \in \mathcal N\\
&(\ref{pp1min1_2_1}b)-(\ref{pp1min1_2_1}c), (\ref{pp1min1_2eq2})-(\ref{pp1min1_2eq3}).
\end{align}
\end{subequations}
Note that the constraints (\ref{pp1min1_2_2}b) and (\ref{pp1min1_2_2}c) are obtained from \eqref{pp1min1_2_1eq2} and \eqref{pp1min1_2_1eq3} by replacing the equalities with inequalities.
The reason is that for the optimal solution of problem (\ref{pp1min1_2_2}),
constraints (\ref{pp1min1_2_2}b) and (\ref{pp1min1_2_2}c) always hold with equality.

Problem (\ref{pp1min1_2_2}) is still non-convex due to the introduced non-convex constraints (\ref{pp1min1_2_2}b) and (\ref{pp1min1_2_2}c).
For the non-convex term $a_{k1n}v_{k1n}$ in constraints (\ref{pp1min1_2_2}b),  we have
\begin{align}\label{pp1min1_2_2eq1}
a_{k1n}v_{k1n}&=\frac1 4[(a_{k1n}+v_{k1n})^2-(a_{k1n}-v_{k1n})^2]
\nonumber\\&
\leq \frac1 4[(a_{k1n}+v_{k1n})^2 -2(a_{k1n}^{(j)}-v_{k1n}^{(j)})
\nonumber\\&\quad
\times(a_{k1n}-a_{k1n}^{(j)}+v_{k1n}-v_{k1n}^{(j)})-(a_{k1n}^{(j)}-v_{k1n}^{(j)})^2]
\nonumber\\&
\triangleq r_{1n}^{(j)}(a_{k1n},v_{k1n}),
\end{align}
where the superscript $(j)$ means the value of variable at the $j$-th iteration and the inequality follows from the fact that a convex function is no less than its first-order Taylor expansion.

Through analyzing all the principal minors of Hessian matrix,  $\sqrt{ 1+A_n^2  + {c_4^2 v_{k1n}^4}}$ is a convex function of variable $(A_n,v_{k1n})$.
For the right hand side of constraints (\ref{pp1min1_2_2}c),  we can obtain
\begin{align}\label{pp1min1_2_2eq2}
&\sqrt{ 1\!+\!A_n^2  \!+\! {c_4^2 v_{k1n}^4} } \!+ \! c_4  v_{k1n}^2  \geq  ( 1\!+\!(A_n^{(j)})^2  \!+\! {c_4^2 (v_{k1n}^{(j)})^4} )^{-\frac 1 2}
\nonumber\\ &\quad\quad\quad \times[A_n^{(j)}(A_n-A_n^{(j)})+
2c_4^2 (v_{k1n}^{(j)})^3 (v_{k1n}-v_{k1n}^{(j)})
]
\nonumber\\&\quad\quad\quad
+\sqrt{1\!+\!(A_n^{(j)})^2  \!+\! {c_4^2 (v_{k1n}^{(j)})^4}}+c_4  (v_{k1n}^{(j)})^2
\nonumber\\&\quad\quad\quad
 +2c_4(v_{k1n}-v_{k1n}^{(j)})\triangleq r_{2n}^{(j)}(A_n,v_{k1n}).
\end{align}

By replacing the term $a_{k1n}v_{k1n}$ in (\ref{pp1min1_2_2}b) with its upper bound $r_{1n}^{(j)}(a_{k1n},v_{k1n})$ and the right hand side of  constraints (\ref{pp1min1_2_2}c) with the lower bound $r_{2n}^{(j)} (A_n,v_{k1n})$, problem (\ref{pp1min1_2_2}) becomes a convex problem, which can be effectively solved by the interior point method~\cite{boyd2004convex}.
The SCA-based algorithm for problem (\ref{pp1min1_2_2}) is summarized in Algorithm 2.

\begin{algorithm}[t]
\caption{SCA-Based Algorithm for Problem (\ref{pp1min1_2_2})}
\begin{algorithmic}[1]
\STATE Obtain a feasible $(\{q_{k1n}^{(0)}\}, \{v_{k1n}^{(0)}\},  \{a_{k1n}^{(0)}\},\{A_n^{(0)}\},\{B_n^{(0)}\} )$ of problem (\ref{pp1min1_2_2}). Set $j=0$.
\REPEAT
\STATE
Replace  $a_{k1n}v_{k1n}$ in (\ref{pp1min1_2_2}b) with $r_{1n}^{(j)} (a_{k1n},v_{k1n})$ and the right hand side of (\ref{pp1min1_2_2}c) with $r_{2n}^{(j)}(A_n,v_{k1n})$.
\STATE Obtain the optimal $(\{\!q_{k1n}^{(j+1)}\!\}, \{\!v_{k1n}^{(j+1)}\!\},  \{\!a_{k1n}^{(j+1)}\!\},\{\!A_n^{(j+1)}\!\},$ $\{B_n^{(j+1)}\} )$ of  convex problem (\ref{pp1min1_2_2}).
\STATE Set $j=j+1$.
\UNTIL the objective value (\ref{pp1min1_2_2}a) converges.
\end{algorithmic}
\end{algorithm}
}}

\section{Energy Minimization with Fixed Path Planning}
In this section, the energy minimization problems for HD and FD modes with optimal path planning are respectively solved.
According to problems (\ref{sys1min1}) and (\ref{sys1min1_2}), the energy minimization with fixed path planning can be formulated as
 \begin{subequations}\label{ee2min1}
\begin{align}
\mathop{\min}_{\tilde{ \pmb q}, \tilde{\pmb t}, \pmb h, \pmb \rho, \pmb p   }\!\!&\:
\sum_{k=1}^{K} [E_2(q_{k2}(t),t_{k2})\!+\!E_4(q_{k4}(t), t_{k4})
 \!+\!(P_0\!+\!P_1)t_{k3}]  \\
\textrm{s.t.}\:\:\:
 &r_{\pi_k}^{\text {HD}} \geq D_{\pi_k},  \forall k \in\mathcal K \\
& e_{\pi_k}^{\text {HD}} \geq c_{\pi_k}^{\text {HD}},   \forall k \in\mathcal K\\
& 0\leq \rho_k\leq 1,   \forall k \in\mathcal K\\
& q_{ks}(0)\!=\!0,  q_{ks}(t_{ks})\!=\!H\!-\!h_k, \forall k\! \in\!\mathcal K\!, s\!=\!2,4 \\
&v_{ks}(0)=v_{ks}(t_{ks})=0,  \forall k \in\mathcal K, s\!=\!2,4\\
&  v_{ks}(t)\!=\!\dot{q}_{ks}(t), a_{ks}(t)\!=\!\ddot{q}_{ks}(t),\! \forall k \!\in\!\mathcal K, \!s\!=\!2,4\!\!\\
&|v_{ks}(t)|\!\leq\! V_{\max},|a_{ks}(t)|\!\leq\! A_{\max}, \!\forall k  \!\in\!\mathcal K , s\!=\!2,4,\!\!\!
\end{align}
\end{subequations}
for HD,
and
 \begin{subequations}\label{ee2min1_2}
\begin{align}
\mathop{\min}_{ \tilde{\pmb q}, \tilde{\pmb t}, \pmb h, \pmb p   }\quad
& \sum_{k=1}^{K} [E_2(q_{k2}(t),t_{k2})\!+\!E_4(q_{k4}(t), t_{k4})
 \!+\!(P_0\!+\!P_1)t_{k3}]\\
\textrm{s.t.}\quad
& r_{\pi_k}^{\text {FD}} \geq D_{\pi_k},  \forall k \in\mathcal K \\
& e_{\pi_k}^{\text {FD}} \geq c_{\pi_k}^{\text {HD}},   \forall k \in\mathcal K\\
& (\ref{ee2min1}e)-(\ref{ee2min1}h),
\end{align}
\end{subequations}
for FD.

From \eqref{ee2min1} and \eqref{ee2min1_2}, both objective function and constraints can be decoupled for $k$. Thus, it reduces to solve the specific energy minimization problem for a specific user in steps 2-4 of each stage.
It is also observed that with fixed $h_k$ the energy consumption $E_2(q_{k2}(t),t_{k2})+E_4(q_{k4}(t), t_{k4})$ is independent of energy consumption $(P_0+P_1)t_{k3}$.
Based on the above observations, problem \eqref{ee2min1} for  a specific $k$ can be equivalent to
 \begin{equation}\label{ee2min2_0}
 \mathop{\min}_{h_k}=V_1(h_k) +V_2(h_k),
\end{equation}
where $V_1(h_k)$ is the optimal objective value of problem \eqref{ee2min2_1} given as
\begin{subequations}\label{ee2min2_1}
\begin{align}
\mathop{\min}_{\substack{q_{k2}(t),q_{k4}(t),\\ t_{k2}, t_{k4}} }\: &
 E_2(q_{k2}(t),t_{k2})+E_4(q_{k4}(t), t_{k4})  \\
\textrm{s.t.}\:  \:
& q_{ks}(0)\!=\!0,  q_{ks}(t_{ks})\!=\!H-h_k, \forall s\!=\!2,4 \\
&v_{ks}(0)\!=\!v_{ks}(t_{ks})\!=\!0,  \forall  s\!=\!2,4\\
&  v_{ks}(t)\!=\!\dot{q}_{ks}(t), a_{ks}(t)\!=\!\ddot{q}_{ks}(t), \forall s\!=\!2,4\\
&|v_{ks}(t)|\!\leq\! V_{\max},|a_{ks}(t)|\!\leq\! A_{\max}, \forall  s\!=\!2,4,
\end{align}
\end{subequations}
and $V_2(h_k)$ is the optimal objective value of problem \eqref{ee2min2_2} given as
\begin{subequations}\label{ee2min2_2}
\begin{align}
\mathop{\min}_{t_{k3},\rho_k,p_{\pi_k}}\: &
(P_0+P_1)t_{k3} \\
\textrm{s.t.}\quad  \:
 &(1-\rho_k)
t_{k3} B\log_2
\left( 1 + \frac {p_{\pi_k} b  h_k^{-\alpha}}{B \sigma^2}
\right) \!\geq\!D_{\pi_k}\\
& \zeta  P b  h_k^{-\alpha} \rho_k t_{k3} \geq \rho_k t_{k3} p_{\pi_k}^{\text{re}}
\nonumber\\
&\qquad\qquad\qquad
+(1-\rho_k)t_{k3} \left(p_{\pi_k}^{\text{tr}}+ \frac{p_{\pi_k} }{\epsilon_{\pi_k}}\right)\\
& 0\leq \rho_k\leq 1.
\end{align}
\end{subequations}
Note that we have substituted the value of $r_{\pi_k}^{\text {FD}}, e_{\pi_k}^{\text {FD}}, c_{\pi_k}^{\text {HD}}$ into problem \eqref{ee2min2_2}  according to \eqref{sys1sf1eq3}-\eqref{sys1sf1eq3_1}.
Similarly, problem \eqref{ee2min1_2} with a fixed $k$ can be equivalent to
 \begin{equation}\label{ee2min3_0}
 \mathop{\min}_{h_k}=V_1(h_k) +V_3(h_k),
\end{equation}
where  $V_3(h_k)$ is the optimal objective value of problem   \eqref{ee2min3_2} given as
\begin{subequations}\label{ee2min3_2}
\begin{align}
\mathop{\min}_{t_{k3},p_{\pi_k}}\: &
(P_0+P_1)t_{k3} \\
\textrm{s.t.}\quad  \:
 &(t_{k3}-\delta_{\pi_k}) B \log_2 \left(
1+\frac{p_{\pi_k} b  h_k^{-\alpha}}{\gamma P +B\sigma^2}
\right) \!\geq\!D_{\pi_k}\\
& \zeta  P b  h_k^{-\alpha}t_{k3}
\geq
t_{k3}  p_{\pi_k}^{\text{re}}+ (t_{k3}-\delta_{\pi_k}) \left(p_{\pi_k}^{\text{tr}}+ \frac{p_{\pi_k} }{\epsilon_{\pi_k}}\right).
\end{align}
\end{subequations}

Since problems \eqref{ee2min2_0} and \eqref{ee2min3_0} only involve only one variable, the 1D exhaustive search method can be used to obtain the optimal solution.
In the following, we separately solve problems \eqref{ee2min2_1}, \eqref{ee2min2_2}, and \eqref{ee2min3_2}.

{\color{myc1}{
For the feasibility  of problem \eqref{ee2min2_1}, the feasible solution can be obtained as in \eqref{pp1min1_2feq1} and \eqref{pp1min1_2feq2}.
For problems \eqref{ee2min2_2} and \eqref{ee2min3_2}, the optimal solutions are obtained in closed form according to the following Theorems 2 and 3.

\subsection{Vertical Trajectory Optimization for Problem \eqref{ee2min2_1}}

According to Lemmas 1 and 2 in Section III-B, problem \eqref{ee2min2_1} can be effectively solved without loss of optimality even with fixed $t_{k2}$ and $t_{k4}$.
Similar to Section III-B, the discrete linear state-space approximation is adopted. By discretizing
the time duration $t_{k2}$ ($t_{k4}$) into $N+1$ slots with step size $\delta_2=\frac{t_{k2}}{N_2}$ ($\delta_4=\frac{t_{k4}}{N_4}$), i.e., $t=n\delta_2$ ($t=n\delta_4$), $n=0,1,\cdots,N$, the UAV trajectory $q_{k2}(t)$ ($q_{k4}(t)$) can be characterized by the discrete-time UAV location $\{q_{k4n}\}$ ($\{q_{k4n}\}$).
As a result, problem \eqref{ee2min2_1} can be expressed in the discrete form as
\begin{subequations}\label{ee2min2_1_1}
\begin{align}
\mathop{\min}_{ \substack{ \{q_{k2n}\},\{v_{k2n}\},\{a_{k2n}\},\\\{q_{k4n}\},\{v_{k4n}\},\{a_{k4n}\}}}
&  P_2(t_{k2}+t_{k4}) + W h_k
\nonumber\\
+\sum_{n=1}^N&
\frac {W-ma_{k2n}} 2 \sqrt{v_{k2n}^2 + \frac{2(W-ma_{k2n})}{\rho A}}
\nonumber\\
+\sum_{n=1}^N&
\frac {W+ma_{k4n}} 2 \sqrt{v_{k4n}^2 + \frac{2(W+ma_{k4n})}{\rho A}}
 \\
\textrm{s.t.}\quad
& q_{ks0}\!=\!0,  q_{ksN}\!=\!H\!-\!h_k, \forall s\!=\!2,4 \\
&v_{ks0}=v_{ksN}=0,  \forall  s\!=\!2,4\\
&  (\ref{pp1min1_2eq2})-(\ref{pp1min1_2eq3}),
\end{align}
\end{subequations}
The formulation of objective function (\ref{ee2min2_1_1}a) comes from
\eqref{apArweq23}  and \eqref{apArweq24}.

Problem \eqref{ee2min2_1_1}  is non-convex due to non-convex objective function (\ref{ee2min2_1_1}a).
To handle this issue, we introduce variables $X_n$ and $Y_n$ such that
 \begin{align}\label{ee2min2_1_1eq2}
X_n= \sqrt{v_{k2n}^2 + \frac{2(W-ma_{k2n})}{\rho A}}    ,
\end{align}
and
\begin{align}\label{ee2min2_1_1eq3}
Y_n= \sqrt{v_{k4n}^2 + \frac{2(W+ma_{k4n})}{\rho A}}   .
\end{align}
With the above manipulations,  problem \eqref{ee2min2_1_1} can be equivalent to
\begin{subequations}\label{ee2min2_1_1_2}
\begin{align}
\mathop{\min}_{ \substack{ \{q_{k2n}\},\{v_{k2n}\},\{a_{k2n}\},\\\{q_{k4n}\},\{v_{k4n}\},\{a_{k4n}\},\\\{X_n\},\{Y_n\}
}}
&  P_2(t_{k2}+t_{k4}) + W h_k
\nonumber\\
+\sum_{n=1}^N&
\frac {WX_n-ma_{k2n}X_n+WY_n+ma_{k4n}Y_n} 2
 \\
\textrm{s.t.}\quad
& X_n^2 \geq {v_{k2n}^2 + \frac{2(W-ma_{k2n})}{\rho A}}
 \\
 &Y_n^2 \geq {v_{k4n}^2 + \frac{2(W+ma_{k4n})}{\rho A}}   \\\
&  (\ref{ee2min2_1_1}b), (\ref{ee2min2_1_1}c), (\ref{pp1min1_2eq2})-(\ref{pp1min1_2eq3}).
\end{align}
\end{subequations}
Note that the constraints (\ref{ee2min2_1_1_2}b) and (\ref{ee2min2_1_1_2}c) are obtained from \eqref{ee2min2_1_1eq2} and \eqref{ee2min2_1_1eq3} by replacing the equalities with inequalities.
To handle the non-convexity of $-a_{k2n}X_n+a_{k4n}Y_n$ in (\ref{pp1min1_2_2}a), the
convex approximation can be obtained by using the similar method  in \eqref{pp1min1_2_2eq1}. Moreover, for the left hand sides of  (\ref{ee2min2_1_1_2}b) and (\ref{ee2min2_1_1_2}c) can be approximated by its first-order Taylor expansion as in \eqref{pp1min1_2_2eq2}.
As a result, problem \eqref{ee2min2_1_1_2} can be solved by the SCA-based algorithm. }}

\subsection{HD Mode for Problem \eqref{ee2min2_2}}
\begin{theorem}
For HD, the optimal solution of problem \eqref{ee2min2_2}, $(t_{k3}^*, \rho_k ^*, p_{\pi_k}^*)$, is given by
\begin{equation}\label{sf2hdeq5_0}
t_{k3}^*=t_{k31}^*+t_{k32}^*, \rho_k^*=\frac{t_{k31}^*}{t_{k31}^*+t_{k32}^*},
\end{equation}
and
\begin{equation}\label{sf2hdeq5_0_2}
p_{\pi_k}^*=\frac{\epsilon_{\pi_k}( \zeta  P b  h_k^{-\alpha} -p_{\pi_k}^{\text{re}} )t_{k31}^*}{t_{k32}^*} -\epsilon_{\pi_k}p_{\pi_k}^{\text{tr}}.,
\end{equation}
where
 \begin{equation}\label{sf2hdeq5_2}
 t_{k31}^* = \left( 2^{\frac{D}{B t_{k32}^* }} -u_1 \right) \frac{t_{k32}^*  }{u_2}.
\end{equation}
\begin{equation}\label{sf2hdeq5_1}
t_{k32}^*= \frac{ (\ln 2)D}{ B W \left(  \frac{u_2-1}
{\text e  }\right) +B},
\end{equation}
 \begin{equation}\label{sf2hdeq5_2_2}
u_1=1-\frac {\epsilon_{\pi_k}p_{\pi_k}^{\text{tr}} b  h_k^{-\alpha}}{B \sigma^2},u_2= \frac {\epsilon_{\pi_k}( \zeta  P b  h_k^{-\alpha} -p_{\pi_k}^{\text{re}} ) b  h_k^{-\alpha}}{B \sigma^2}.
\end{equation}
and
 $W(\cdot)$ is the Lambert-W function.

\end{theorem}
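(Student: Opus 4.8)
The plan is to use that $P_0+P_1>0$, so minimizing $(P_0+P_1)t_{k3}$ is equivalent to minimizing the total hovering time $t_{k3}$. First I would change variables to the harvesting and transmission durations $t_{k31}=\rho_k t_{k3}$ and $t_{k32}=(1-\rho_k)t_{k3}$, which turns the objective into the separable sum $t_{k31}+t_{k32}$. Under this substitution constraint (\ref{ee2min2_2}b) reads $t_{k32}B\log_2(1+p_{\pi_k}bh_k^{-\alpha}/(B\sigma^2))\ge D_{\pi_k}$, the energy-balance constraint (\ref{ee2min2_2}c) becomes $(\zeta Pbh_k^{-\alpha}-p_{\pi_k}^{\text{re}})t_{k31}\ge t_{k32}(p_{\pi_k}^{\text{tr}}+p_{\pi_k}/\epsilon_{\pi_k})$, and (\ref{ee2min2_2}d) reduces to $t_{k31},t_{k32}\ge 0$. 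Once the optimal pair $(t_{k31}^*,t_{k32}^*)$ is found, the identities in \eqref{sf2hdeq5_0} recovering $t_{k3}^*$ and $\rho_k^*$ follow immediately from these definitions.

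Next I would argue by monotonicity that both inequality constraints are tight at the optimum. The objective is strictly increasing in $t_{k31}$, and the energy-balance constraint is the only one bounding $t_{k31}$ from below; hence it holds with equality, and solving it for $p_{\pi_k}$ produces \eqref{sf2hdeq5_0_2}. For fixed $t_{k32}$, a smaller $p_{\pi_k}$ lowers the right-hand side of this tight balance and therefore shrinks $t_{k31}$, so the rate constraint (\ref{ee2min2_2}b) is active as well; solving it gives $p_{\pi_k}=\frac{B\sigma^2}{bh_k^{-\alpha}}(2^{D_{\pi_k}/(Bt_{k32})}-1)$. Substituting this back into the tight energy balance and grouping terms through the constants $u_1,u_2$ of \eqref{sf2hdeq5_2_2} expresses $t_{k31}$ as the explicit function of the single remaining variable $t_{k32}$ given in \eqref{sf2hdeq5_2}.

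With these two inversions the problem collapses to the scalar minimization of $J(t_{k32})=t_{k31}+t_{k32}=\frac{t_{k32}}{u_2}(2^{D_{\pi_k}/(Bt_{k32})}+u_2-u_1)$ over $t_{k32}>0$. I would first verify that $J$ is convex, which needs $u_2>0$, equivalently the feasibility condition $\zeta Pbh_k^{-\alpha}>p_{\pi_k}^{\text{re}}$: its only nonlinear term $t_{k32}\,2^{D_{\pi_k}/(Bt_{k32})}$ has second derivative proportional to $t_{k32}^{-3}2^{D_{\pi_k}/(Bt_{k32})}>0$, so the unique stationary point is the global minimizer. Setting $J'(t_{k32})=0$ yields the transcendental condition $2^{D_{\pi_k}/(Bt_{k32})}(1-\frac{D_{\pi_k}\ln 2}{Bt_{k32}})=u_1-u_2$. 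The decisive manipulation is the substitution $w=\frac{D_{\pi_k}\ln 2}{Bt_{k32}}-1$, which recasts this as $w\,\mathrm{e}^{w}=(u_2-u_1)/\mathrm{e}$; the principal Lambert-W branch then gives $w=W((u_2-u_1)/\mathrm{e})$, and inverting through $t_{k32}=\frac{D_{\pi_k}\ln 2}{B(w+1)}$ reproduces the Lambert-W closed form \eqref{sf2hdeq5_1}.

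The hard part will be this final step: rearranging the stationarity equation into the canonical form $w\,\mathrm{e}^{w}=z$ so that it can be inverted by the Lambert-W function, together with the convexity check that legitimizes treating the single stationary point as the global optimum. The two active-constraint inversions are routine algebra, and the boundary behaviour $J'(t_{k32})\to-\infty$ as $t_{k32}\to 0^+$ and $J'(t_{k32})>0$ as $t_{k32}\to\infty$ can be invoked as a sanity check that the Lambert-W root lies in the interior of the feasible region and is indeed the minimizer.
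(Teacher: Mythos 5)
Your route is essentially the paper's own (Appendix~C): substitute $t_{k31}=\rho_k t_{k3}$, $t_{k32}=(1-\rho_k)t_{k3}$, argue that both the energy-balance and rate constraints are tight at the optimum, eliminate $p_{\pi_k}$ and then $t_{k31}$, and minimize the resulting one-variable convex function via a Lambert-W inversion. Your tightness and convexity justifications are, if anything, more explicit than the paper's one-line assertions, and your recovery of \eqref{sf2hdeq5_0}, \eqref{sf2hdeq5_0_2} and \eqref{sf2hdeq5_2} is correct.

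There is, however, a concrete mismatch in the last step that you gloss over. Carrying $u_1$ through correctly, your reduced objective is $J(t_{k32})=\frac{t_{k32}}{u_2}\left(2^{D/(Bt_{k32})}+u_2-u_1\right)$, your stationarity condition is $w\,\mathrm{e}^{w}=(u_2-u_1)/\mathrm{e}$, and hence your closed form is $t_{k32}^*=\frac{(\ln 2)D}{B\,W\!\left((u_2-u_1)/\mathrm{e}\right)+B}$. This is \emph{not} \eqref{sf2hdeq5_1}, which has $W\!\left((u_2-1)/\mathrm{e}\right)$; the two coincide only if $u_1=1$, i.e.\ $p_{\pi_k}^{\text{tr}}=0$, whereas the paper's simulations use $p^{\text{tr}}=1$~mW. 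The discrepancy originates in the paper itself: in passing from \eqref{sf2hdmin2} to \eqref{sf2hdmin3}, the term $2^{D/(Bt_{k32})}-u_1$ is silently replaced by $2^{D/(Bt_{k32})}-1$, and this slip propagates through \eqref{sf2hdeq3} and \eqref{sf2hdeq5} into the theorem statement. So your algebra is right and in fact exposes an error in the stated result, but your closing claim that the inversion ``reproduces the Lambert-W closed form \eqref{sf2hdeq5_1}'' is false as written: as a proof of the theorem \emph{as stated} it does not go through, and you should instead say explicitly that the correct argument of $W(\cdot)$ is $(u_2-u_1)/\mathrm{e}$, with \eqref{sf2hdeq5_1} valid only in the special case $p_{\pi_k}^{\text{tr}}=0$.
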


\begin{proof}
Please refer to Appendix C.
\end{proof}

According to Theorem 1, the optimal solution of problem \eqref{ee2min2_2} for HD is obtained in closed form.
{\color{myc1}{Based on \eqref{sf2hdeq5_2_2}, we must have $\zeta  P b  h_k^{-\alpha} -p_{\pi_k}^{\text{re}}>0$, which means that the harvested power at user $\pi_k$ must be greater than its receive power, i.e., the following constraint about the UAV hight is obtained
\begin{equation}\label{sf2hdeq3_5}
h_k < \left( \frac{\zeta  P b }{p_{\pi_k}^{\text{re}}}
\right)^{\frac 1{\alpha}}.
\end{equation}}}


\vspace{-1em}
\subsection{FD Mode for Problem \eqref{ee2min3_2}}
\begin{theorem}
For FD,  the optimal solution of problem \eqref{ee2min3_2}, $(t_{k3}^*,  p_{\pi_k}^*)$, satisfies,
$t_{k3}^*$ is the solution to the following equation
\begin{align}\label{sf2fdeq3_1}
\left( 2^{\frac{D}{B (t_{k3} -\delta_{\pi_k})}} -1 \right) \frac{\gamma P +B\sigma^2}{b  h_k^{-\alpha}}&-\frac{\epsilon_{\pi_k}( \zeta  P b  h_k^{-\alpha} -p_{\pi_k}^{\text{re}} )t_{k3}}{t_{k3}-\delta_{\pi_k}}
\nonumber\\& +\epsilon_{\pi_k}p_{\pi_k}^{\text{tr}}=0.
\end{align}
and
\begin{equation}\label{sf2fdeq3_2}
p_{\pi_k}^*=\left( 2^{\frac{D}{B (t_{k3}^* -\delta_{\pi_k})}} -1 \right) \frac{\gamma P +B\sigma^2}{b  h_k^{-\alpha}}.
\end{equation}

\end{theorem}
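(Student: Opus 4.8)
The plan is to exploit the monotonicity of the objective to eliminate the power variable. Since $P_0,P_1>0$, the objective $(P_0+P_1)t_{k3}$ is strictly increasing in $t_{k3}$, so the problem reduces to finding the \emph{smallest} feasible $t_{k3}$ (with $t_{k3}>\delta_{\pi_k}$), after which $p_{\pi_k}^*$ is pinned by the active constraints. Thus I first project out $p_{\pi_k}$ and turn \eqref{ee2min3_2} into a one-dimensional feasibility question in $t_{k3}$.

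First I would solve the data constraint (\ref{ee2min3_2}b) for $p_{\pi_k}$, obtaining the lower bound
\begin{equation}
p_{\pi_k}\;\geq\; \left(2^{\frac{D_{\pi_k}}{B(t_{k3}-\delta_{\pi_k})}}-1\right)\frac{\gamma P+B\sigma^2}{b\,h_k^{-\alpha}}\;\triangleq\;p_{\mathrm{lb}}(t_{k3}),
\end{equation}
which is strictly decreasing in $t_{k3}$. Next I would isolate $p_{\pi_k}$ in the energy-harvesting constraint (\ref{ee2min3_2}c), obtaining the upper bound
\begin{equation}
p_{\pi_k}\;\leq\; \frac{\epsilon_{\pi_k}\big(\zeta P b\,h_k^{-\alpha}-p_{\pi_k}^{\text{re}}\big)t_{k3}}{t_{k3}-\delta_{\pi_k}}-\epsilon_{\pi_k}p_{\pi_k}^{\text{tr}}\;\triangleq\;p_{\mathrm{ub}}(t_{k3}).
\end{equation}
Under the physically meaningful regime $\zeta Pb\,h_k^{-\alpha}>p_{\pi_k}^{\text{re}}$ (harvested power exceeds the receive-circuit power, the FD analogue of \eqref{sf2hdeq3_5}), $p_{\mathrm{ub}}$ is positive and also decreasing in $t_{k3}$, but only at the polynomial rate $t_{k3}/(t_{k3}-\delta_{\pi_k})$.

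For a fixed $t_{k3}$ a feasible $p_{\pi_k}$ exists iff $p_{\mathrm{lb}}(t_{k3})\leq p_{\mathrm{ub}}(t_{k3})$, so I would study the gap $g(t_{k3})=p_{\mathrm{lb}}(t_{k3})-p_{\mathrm{ub}}(t_{k3})$. As $t_{k3}\to\delta_{\pi_k}^+$ the exponential growth of $p_{\mathrm{lb}}$ dominates and $g\to+\infty$, whereas as $t_{k3}\to\infty$ one has $p_{\mathrm{lb}}\to0$ and $p_{\mathrm{ub}}\to\epsilon_{\pi_k}(\zeta Pb\,h_k^{-\alpha}-p_{\pi_k}^{\text{re}}-p_{\pi_k}^{\text{tr}})$, so $g\to-\epsilon_{\pi_k}(\zeta Pb\,h_k^{-\alpha}-p_{\pi_k}^{\text{re}}-p_{\pi_k}^{\text{tr}})<0$. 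Hence the feasible set $\{t_{k3}:g(t_{k3})\leq0\}$ is bounded below, and its infimum---attained because $g$ is continuous---is the smallest root of $g(t_{k3})=0$. Setting $p_{\mathrm{lb}}=p_{\mathrm{ub}}$ and rearranging yields exactly \eqref{sf2fdeq3_1}. At this optimal $t_{k3}^*$ the two bounds coincide, the feasible interval for the power collapses to a single point, and $p_{\pi_k}^*=p_{\mathrm{lb}}(t_{k3}^*)$, which is \eqref{sf2fdeq3_2}; in particular both constraints (\ref{ee2min3_2}b) and (\ref{ee2min3_2}c) are active at the optimum.

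The main obstacle I anticipate is rigorously justifying that the minimal feasible $t_{k3}$ occurs exactly where the two bounds meet rather than where only one of them binds. This rests on the limiting and monotonicity analysis of $g$ above; in particular I would need to confirm that in the operating regime $g$ crosses zero from positive to negative (ideally verifying $g$ is monotone there, so the root of \eqref{sf2fdeq3_1} is unique and coincides with the boundary of the feasible set). Note that, unlike the HD case of Theorem~2, \eqref{sf2fdeq3_1} is transcendental with no Lambert-$W$ closed form, so the theorem only characterizes $t_{k3}^*$ implicitly and the root would be located numerically.
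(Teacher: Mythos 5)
Your proof is correct, and it rests on the same pivotal fact as the paper's Appendix D: at the optimum both constraints (\ref{ee2min3_2}b) and (\ref{ee2min3_2}c) are active, after which solving the two equalities and eliminating $p_{\pi_k}$ gives exactly \eqref{sf2fdeq3_1} and \eqref{sf2fdeq3_2}. Where you differ is in how activity is established. The paper uses a one-sentence perturbation argument (``otherwise the objective value can be further decreased''), which as stated is delicate: if only (\ref{ee2min3_2}b) were slack, shrinking $t_{k3}$ with $p_{\pi_k}$ held fixed can violate (\ref{ee2min3_2}c), since the harvested-energy side of that constraint loses $\zeta P b h_k^{-\alpha}-p_{\pi_k}^{\text{re}}$ per unit time, which may exceed what the consumption side loses; a correct perturbation must move both variables. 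Your reduction to a one-dimensional feasibility problem sidesteps this entirely: because the objective depends only on $t_{k3}$ and is strictly increasing, the optimum is the smallest $t_{k3}>\delta_{\pi_k}$ with $p_{\mathrm{lb}}(t_{k3})\le p_{\mathrm{ub}}(t_{k3})$, and your boundary analysis of the gap $g$ (blow-up at $\delta_{\pi_k}^{+}$, negative limit at infinity precisely under the paper's own height condition \eqref{sf2fdeq3_5}) together with continuity forces $g(t_{k3}^*)=0$, i.e., both constraints bind and the power interval collapses to a point. This buys two things the paper's argument does not: a rigorous proof of activity, and the identification of $t_{k3}^*$ as the \emph{smallest} root of \eqref{sf2fdeq3_1}. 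Your closing caution is also well placed: the paper's post-theorem claim that the left-hand side of \eqref{sf2fdeq3_1} is monotonically decreasing in $t_{k3}$ (hence the root is unique and bisection applies) is not immediate when $\delta_{\pi_k}>0$, because $p_{\mathrm{lb}}$ is decreasing while $-p_{\mathrm{ub}}$ is increasing, so monotonicity of $g$ depends on which effect dominates; your formulation makes clear that even without uniqueness the optimum is the smallest root, which is the right statement to prove.
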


\begin{proof}
Please refer to Appendix D.
\end{proof}

Since the left hand side of equation \eqref{sf2fdeq3_1} is monotonically decreasing with $t_{k3}$, the unique solution $t_{k3}^*$ to  \eqref{sf2fdeq3_1} can be obtained by the bisection method.
For the case that $\delta_k\approx0$, to ensure $p_{\pi_k}>0$ from (\ref{ee2min3_2}c), we must have
\begin{equation}\label{sf2fdeq3_5}
h_k < \left( \frac{\zeta  P b }{p_{\pi_k}^{\text{re}}+p_{\pi_k}^{\text{tr}}}
\right)^{\frac 1{\alpha}}.
\end{equation}

\section{Numerical Results}

There are $K=8$ users uniformly in a square area of size $1000$ m $\times$ $1000$~m with the UAV initially located at its center.
The noise power spectrum density is $\sigma^2=-174$ dBm/Hz.
We set reference channel gain $\beta_0=1.42\times 10^{-4}$ \cite{Yang2019OptimizationOR} and energy harvesting coefficient $\zeta=0.9$.
The maximal flying speed of the UAV is $V_{\max}=30$ m/s and the maximal acceleration is $A_{\max}=5$ m/s$^2$.
All users have the same power amplifier efficiency $\epsilon_k=0.9$, the same processing delay of the energy circuits $\delta_{k}=2$ s, and the same minimal required data $D_k=D$.
The modeling coefficients for the probabilistic LoS channel model are set as $C_1=10$, $C_2=0.6$, $\kappa=0.2$, and $\alpha=2.3$.
Unless specified otherwise, the system parameters are set as: {\color{myc1}{the UAV altitude is $H = 120$ m}}, the maximal wireless transmission power of the UAV is $P =1$~W, the system bandwidth is $B = 20$ MHz,
the receive and transmit circuit power consumption  are respectively set as $p_{k}^{\text {re}}=p^{\text {re}}=1$~uW and $p_{k}^{\text {tr}}=p^{\text {tr}}=1$~mW for all users, the self-interference coefficient $\gamma=-100$ dBm,
and the minimal required data is $D=1$ Mbits.

\begin{figure}
\centering
\includegraphics[width=3.45in]{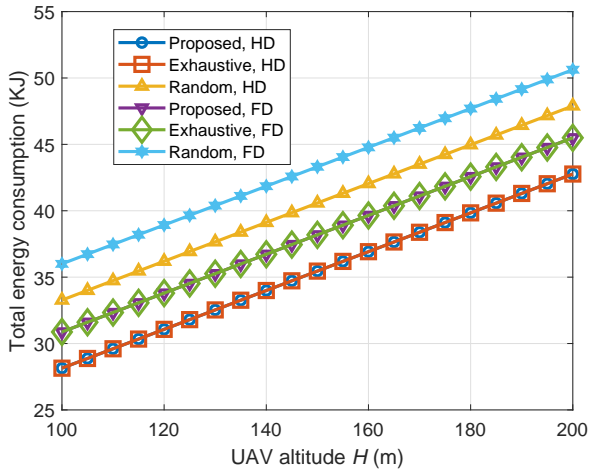}
\caption{Total energy consumption versus UAV altitude $H$.}\label{nr5fig3}
\end{figure}

The total energy consumption versus UAV altitude $H$ is illustrated in Fig.~\ref{nr5fig3}.
We compare the proposed optimal path planning Algorithm 1 (labeled as `Proposed'), with the exhaustive search method to obtain the optimal path planning (labeled as `Exhaustive'), and the random method to obtain a feasible path planning (labeled as `Random').
It is found that the proposed algorithm achieves the same performance as the exhaustive search method, which shows the optimality of the proposed algorithm.
From this figure, it is also shown that the total energy consumption linearly increases with altitude $H$.
This is because high altitude means long distance the UAV needs to change for energy broadcast and information reception, which results high propulsion energy consumption of the UAV.

{\color{myc1}{
\begin{figure}
\centering
\includegraphics[width=3.45in]{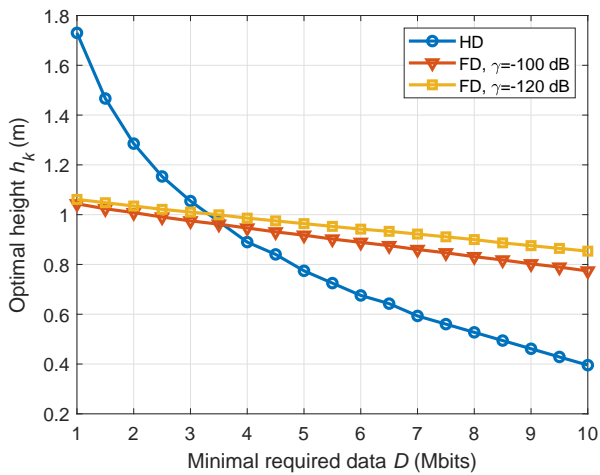}
\caption{Optimal height versus minimal required data $D$.}\label{nr5fig7}
\end{figure}

In Fig.~\ref{nr5fig7}, we  illustrate the optimal height $h_k$ in problems \eqref{ee2min2_0} and \eqref{ee2min3_0} versus minimal required data $D$.
From this figure, it is found that the optimal height decreases with an increasing $D$.
This is due to the fact that high minimal required data needs high data rate to decrease the transmission time, while low height can lead to high data rate between the user and the UAV.
It is also found that the decrease speed of the optimal height for HD is higher than that for FD.
For low data demand, i.e., $D\leq 3$ Mbits, the optimal hight of HD is higher than that of FD.
As for high data demand, i.e., $D> 3$ Mbits, the optimal hight of HD is lower than that of FD.
According to this figure, it is observed that the optimal hight is low, less than 2 m.
The reason is that the channel gain is the better for small hight, which results in short staying time and low energy consumption.

\begin{figure}
\centering
\includegraphics[width=3.45in]{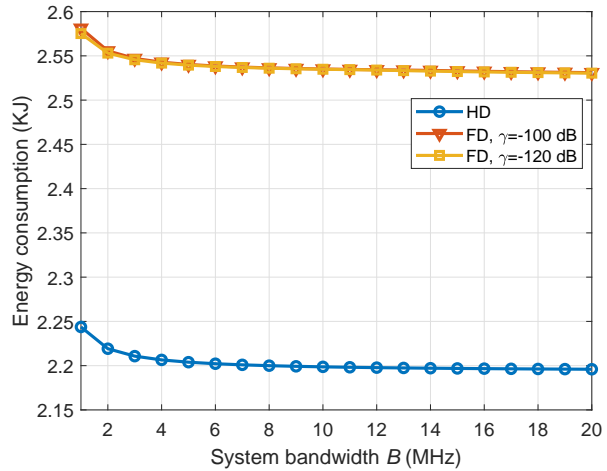}
\caption{Energy consumption versus system bandwidth with $D=10$ Mbits.}\label{nr5fig4}
\end{figure}

The energy consumption in the following Figs. \ref{nr5fig4} to \ref{nr5fig9} means the energy consumption in (\ref{ee2min2_0}a) and (\ref{ee2min3_0}a), which includes the energy for the UAV to firstly decrease the altitude, then collect data and finally increase the altitude.
Fig. \ref{nr5fig4} presents the energy consumption versus system bandwidth.
From this figure, it is shown that the energy consumption first decreases rapidly with bandwidth and then decreases slowly with bandwidth.
This is because for small bandwidth, the UAV hovering time is long and the propulsion energy during hovering decreases rapidly with the increase of bandwidth.
For high bandwidth, the hovering time is short and thus the propulsion energy for the UAV to change height dominates the energy consumption.}}

\begin{figure}
\centering
\includegraphics[width=3.45in]{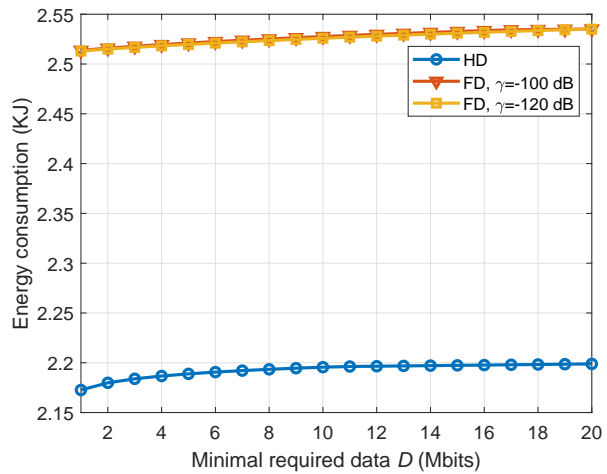}
\caption{Energy consumption versus minimal required data $D$.}\label{nr5fig8}
\end{figure}

The total energy consumption versus minimal data demand $D$ is depicted in Fig.~\ref{nr5fig8}.
It can be observed that the total energy consumption monotonically increases with minimal data demand.
This is because high minimal data demand leads to long hovering time of the UAV, which increases the energy consumption of the UAV.
It is also found that the HD is always superior over FD, and FD with low self-interference coefficient, i.e., $\gamma=-120$ dB, yields slightly better performance when the self-interference coefficient is high, i.e., $\gamma=-100$ dB.

\begin{figure}
\centering
\includegraphics[width=3.45in]{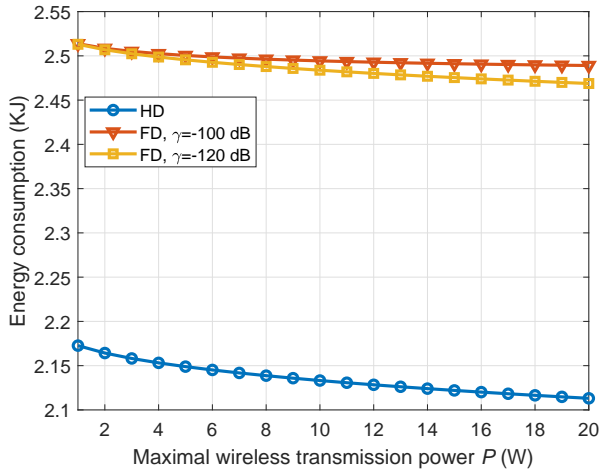}
\caption{Energy consumption versus maximal wireless transmission power $P$.}\label{nr5fig9}
\end{figure}

Fig.~\ref{nr5fig9} presents the total energy consumption versus maximal wireless transmission power $P$.
We find that the total energy consumption first decreases dramatically and then decreases smooth with maximal wireless transmission power $P$.
According to Figs. \ref{nr5fig4} to \ref{nr5fig9}, it is found that HD is always better than FD.
The reason is that the transmit circuit power consumption of the user is high in FD, while there is only small receive circuit power consumption of the user in the energy harvesting stage of HD mode.
As a result, the transmit power of the user in FD is lower than that in HD, which results in longer hovering time and higher energy consumption of the UAV.

%

\section{Conclusion}
In this paper, we have investigated the total energy minimization problem for UAV communication with energy harvesting.
It is shown that the UAV should stay directly above the user with low height for energy transferring and information reception.
It is also found that HD mode is recommended compared to FD mode.
{\color{myc1}{The general 3D trajectory optimization for UAV communication with beamforming, instability error, and delay requirements is left for future work.}}


\appendices
{\color{myc1}{
\section{Rotary-Wing UAV Propulsion Energy Consumption}
\setcounter{equation}{0}
\renewcommand{\theequation}{\thesection.\arabic{equation}}

In this appendix, we derive the propulsion energy consumption
model of rotary-wing UAVs.
The main notations and typical values used in this appendix are summarized in Table~I.

\begin{table}[h]\footnotesize
 \caption{List of main notations and typical values.}
\begin{tabular}{c|l|l}
\hline
  Notation & Physical meaning & Value \\ \hline
  $W$ &   UAV weight in Newton       & 20 \\ \hline
  $\rho$ &   Air density in kg/m$^3$       & 1.225 \\ \hline
  $S_{\text{FP}}$ & Fuselage equivalent flat plate area in m$^2$       & 0.0151 \\ \hline
  $R$ & Rotor radius in meter (m)       & 0.4 \\ \hline
   $A$ & Rotor disc area in m$^2$       & 0.503 \\ \hline
   $\Omega$ & Blade angular velocity in radians/second & 300 \\\hline
      $d_0$ & Fuselage drag ratio & 0.6
   \\ \hline
   $s$ & Rotor solidity & 0.05 \\\hline
      $\delta$ & Profile drag coefficient & 0.012 \\ \hline
   $k$ & Incremental correction factor to induced power & 0.1
   \\ \hline
      $m$ & UAV mass in kg   & 2.04   \\ \hline
         $g$ & Gravitational acceleration in m/s$^2$  & 9.8 \\ \hline
\end{tabular}
\end{table}

\subsection{Rotary-Wing UAVs in Forward Straight Flight}

\begin{figure}
\centering
\includegraphics[width=3.5in]{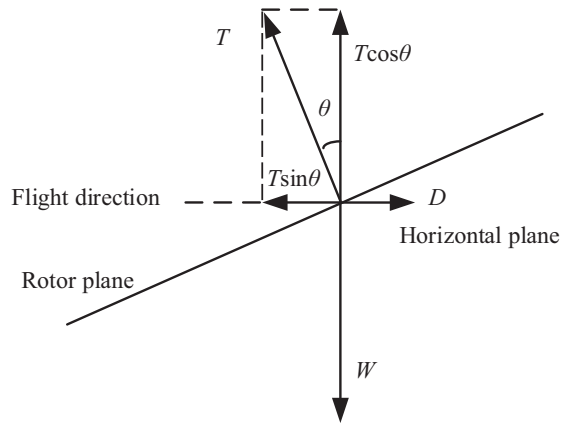}
\vspace{-2em}
\caption{Schematic of the forces on a UAV with a fixed height and straight flight.}\label{sys1figu}
\vspace{-1.5em}
\end{figure}

We first obtain the  propulsion energy consumption
model with a fixed height and straight flight.
Fig.~\ref{sys1figu} shows the simplified schematics of the
longitudinal forces acting on the aircraft with fixed a height \cite[Fig. 13.2]{filippone2006flight}, which include the following forces:
(i) $T$: rotor thrust, normal to the disc plane and directed
upward; (ii) $D$: fuselage drag, which is in the opposite
direction of the aircraft velocity;
and (iii) $W$: aircraft
weight.
In Fig.~\ref{sys1figu}, $\theta$ is the tilt angle of the rotor disc.
From Fig.~\ref{sys1figu}, we have the following equation:
\begin{equation}\label{apArweq1}
T\sin\theta-D=m a,\quad W-T\cos\theta=0,
\end{equation}
where $a$ denotes the acceleration.
According to \cite[Eq. (4.5)]{bramwell2001bramwell}, the UAV fuselage drag $D$ can be written as
\begin{equation}\label{apArweq2}
D=\frac 1 2 \rho S_{\text{FP}}  V^2.
\end{equation}

Due to the complexity of deriving the precise power consumption for a rotary-wing aircraft, we assume that the drag coefficient of the blade section is constant \cite{zeng2019energy}.
 Under some mild assumptions, the propulsion power for rotary-wing UAV with fixed height, speed $V$ and rotor thrust $T$ can be given by  \cite[Eq. (66)]{zeng2019energy},
\begin{align}\label{apArweq3}
&P_{\text{f}}(V, \kappa)=P_0 \left(1+\frac{3V^2}{\Omega^2R^2}\right) \nonumber\\
&+ P_1  \kappa \sqrt{   \sqrt{\kappa^2 +\frac{\rho^2 A^2V^4}{W^2}} - \frac{\rho AV^2}{W}   } + \frac 1 2  d_0 \rho sA V^3,
\end{align}
where
\begin{equation}\label{apArweq5}
\kappa\triangleq \frac T W
\end{equation}
 is defined as the thrust-to-weight ratio,
 $P_0=\frac{\delta}{8}\rho sA \Omega^3 R^3$ and
 $P_1=(1+k)\frac{W^{\frac 3 2 }}{\sqrt{2\rho A}}$.


Based on \eqref{apArweq1}, \eqref{apArweq2} and \eqref{apArweq5}, we can obtain
\begin{equation}\label{apArweq6}
\kappa=\sqrt{1+\frac{\left(  \rho S_{\text{FP}} V^2 + 2ma \right)^2}{4W^2}}.
\end{equation}
Substituting \eqref{apArweq6} into \eqref{apArweq3} yields
\begin{align}\label{apArweq7}
P_{\text{f}}(V, &a)=P_0 \left(1+ c_1V^2 \right)+ P_1  \sqrt{ 1+(c_2V^2+c_3a )^2}   \nonumber\\
& \times \sqrt{ \sqrt{(1+(c_2V^2+c_3a )^2 + {c_4^2V^4} } - c_4 V^2   }+  c_5 V^3.
\end{align}
where
\begin{equation}\label{apArweq8}
c_1=\frac{3 }{\Omega^2R^2},
c_2=\frac{\rho S_{\text{FP}}}{2W},
c_3=\frac{m}{W},
c_4=\frac{\rho A}{W},
c_5=\frac 1 2  d_0 \rho sA.
\end{equation}


For a UAV with trajectory $q(t)$, we have velocity $ v(t)=\dot {q}(t)$ and acceleration $ a(t)=\ddot {  q}(t)$.
The total propulsion energy can be expressed as \eqref{apArweq12}, as shown in the front of the next page, where $T_0$ is the time duration.
\newcounter{mytempeqncnt}
\begin{figure*}[!t]
\normalsize
\setcounter{mytempeqncnt}{\value{equation}}
\begin{align}\label{apArweq12}
E_1( q(t),T_0)=\int_{0}^{T_0} \Bigg[&P_0 \left(1+ c_1 v(t)^2 \right)+  P_1  \sqrt{  1+\left(c_2  v(t)^2+c_3 a(t)  v(t)   \right)^2  }   \nonumber\\
& \cdot \sqrt{ \sqrt{ 1+\left(c_2  v(t) ^2+c_3  a(t) v(t)  \right)^2  + {c_4^2 v(t)^4} } - c_4  v(t)^2   } +  c_5   v(t) ^3\Bigg] \text d t.
\end{align}
\hrulefill
\vspace*{4pt}
\end{figure*}

\subsection{Rotary-Wing UAVs in Vertical Flight}

\begin{figure}
\centering
\includegraphics[width=3.5in]{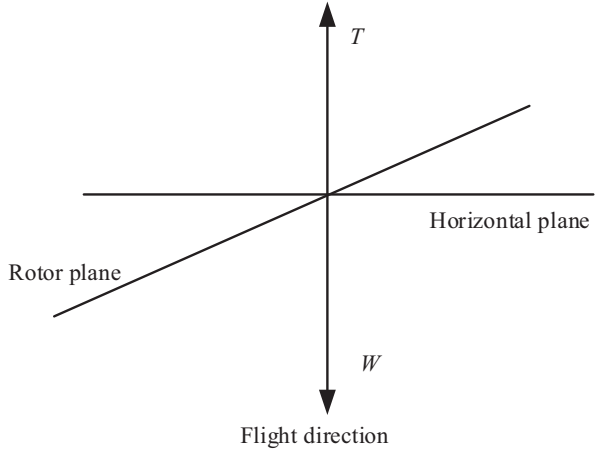}
\vspace{-2em}
\caption{Schematic of the forces on a UAV in vertical flight.}\label{sys1figu2}
\vspace{-1.5em}
\end{figure}

Then, we obtain the  propulsion energy consumption
model in vertical flight.
In Fig.~\ref{sys1figu2}, we show the simplified schematics of the
longitudinal forces acting on the aircraft with in vertical flight.
In vertical flight, it is assumed that the velocity is non-zero only in the vertical direction.

From Fig.~\ref{sys1figu2}, we have
\begin{equation}\label{apArweq21}
W-T=ma
\end{equation}
for vertical descend
and
\begin{equation}\label{apArweq21_2}
T-W=ma
\end{equation}
for vertical climb.

According to \cite[Eqs. (12.13), (12.35)]{filippone2006flight}, the total required power for vertical descend is
\begin{equation}\label{apArweq22}
P_{\text c}(V,T)=P_2+\frac{1}{2}TV + \frac T 2 \sqrt{V^2 + \frac{2T}{\rho A}},
\end{equation}
where $P_2=\frac{\delta}{8}\rho sA \Omega^3 R^3+\frac{kW^{\frac 3 2 }}{\sqrt{2\rho A}}$ and we assume that the acceleration $a$ is smaller than the Gravitational acceleration $g=9.8$ $\text{m/s}^2$.
Substituting $T=W-ma$ to \eqref{apArweq22} yields
\begin{equation}\label{apArweq22_2}
P_{\text c}(V,a)=P_2+\frac{W-ma}{2}  V + \frac {W-ma} 2 \sqrt{V^2 + \frac{2(W-ma)}{\rho A}}.
\end{equation}
In the vertical descend, the direction of both velocity and acceleration is vertical. The trajectory is one dimensional, which can be expressed by $q(t)$.
As a result, the total propulsion energy in vertical descend can be given y
\begin{align}\label{apArweq23}
E_2( q(t),T_0)&=\int_{0}^{T_0}\Bigg(P_2+\frac{W v(t)-ma(t)v(t)}{2}
\nonumber\\&
\quad + \frac {W-ma(t)} 2 \sqrt{v(t)^2 + \frac{2(W-ma(t))}{\rho A}} \Bigg) \text d t
\nonumber\\&=
P_2 T_0+\frac{W ( q(T_0)-q(0))}{2}
\nonumber\\&-\frac{m( v(T_0)^2-v(0)^2)}{4}+\int_{0}^{T_0} \frac {W-ma(t)} 2
\nonumber\\&\quad \times \sqrt{v(t)^2 + \frac{2(W-ma(t))}{\rho A}}   \text d t.
\end{align}

According to \cite[Eqs. (12.13), (12.51)]{filippone2006flight} and \eqref{apArweq21_2}, the required power for vertical climb can be calculated as
\begin{equation}
P_{\text d}(V,a)=P_2+\frac{W+ma}{2}  V + \frac {W+ma} 2 \sqrt{V^2 + \frac{2(W+ma)}{\rho A}}.
\end{equation}
Thus, the total propulsion energy in vertical climb can be given by
\begin{align}\label{apArweq24}
E_4( q(t),T_0)&=\int_{0}^{T_0}\Bigg(P_2+\frac{W v(t)+ma(t)v(t)}{2}
\nonumber\\&
\quad + \frac {W+ma(t)} 2 \sqrt{v(t)^2 + \frac{2(W+ma(t))}{\rho A}} \Bigg) \text d t
\nonumber\\&=
P_2 T_0+\frac{W ( q(T_0)-q(0))}{2}
\nonumber\\&+\frac{m( v(T_0)^2-v(0)^2)}{4}+\int_{0}^{T_0} \frac {W+ma(t)} 2
\nonumber\\&\quad \times \sqrt{v(t)^2 + \frac{2(W+ma(t))}{\rho A}}   \text d t.
\end{align}

\section{Proof of Theorem 1}
\setcounter{equation}{0}
\renewcommand{\theequation}{\thesection.\arabic{equation}}

The dual problem of problem (\ref{pp1min2_12}) with relaxed constraints can be given by:
\begin{equation}\label{dueq1}
\mathop{\max}_{\pmb \beta,\pmb \gamma,\pmb \lambda,\pmb \mu}\quad D(\pmb \beta,\pmb \gamma,\pmb \lambda,\pmb \mu),
\end{equation}
where
\begin{equation}\label{dueq1_2}
D( \pmb \beta,\pmb \gamma,\pmb \lambda,\pmb \mu)=\left\{ \begin{array}{ll}
\!\!\!\mathop{\min}\limits_{\pmb w, \pmb v}
&  \mathcal L (\pmb w, \pmb v, \pmb \beta,\pmb \gamma,\pmb \lambda,\pmb \mu)
\\
\!\!\rm{s.t.}&\sum_{k=1}^K w_{kl}=1,   \forall l\in\mathcal  K\\
& w_{kl}\in[0,1],  \forall k,l\in\mathcal K\\
&v_{kli}\!\geq \!0,  \forall k\!\in\!\mathcal K\!\setminus\!\{1\},l,i\!\in\!\mathcal K,\!\!\!\!\!
\end{array} \right.
\end{equation}
with
\begin{align}\label{dueq2}
&\mathcal L (\pmb w, \pmb v, \pmb \beta,\pmb \gamma,\pmb \lambda,\pmb \mu)=\sum_{l=1}^K w_{1l} E_{0l} + \sum_{k=2}^{K} \sum_{l=1}^K  \sum_{i=1}^K v_{kli}^2 E_{li}
\nonumber\\ & +\sum_{l=1}^K w_{Kl} E_{l0}  +\sum_{k=1}^K \beta_k \left(\sum_{l=1}^K w_{kl}-1\right)
\nonumber\\&
+\sum_{k=2}^{K} \sum_{l=1}^K  \sum_{i=1}^K \bigg[\gamma_{kli} ( w_{(k-1)l}+ w_{ki}-1- v_{kli} )
 \nonumber\\&\qquad\qquad+\lambda_{kli}(v_{kli}-w_{(k-1)l})  +\mu_{kli}(v_{kli}- w_{ki})\bigg]
\end{align}
and $\pmb \beta=\{\beta_k\}, \pmb \gamma=\{\gamma_{kli}\},\pmb\lambda=\{\lambda_{kli}\}, \pmb\mu=\{\mu_{kli}\}$.

To minimize the objective function in (\ref{dueq1_2}), which is a linear combination of $w_{kl}$, we should
let the smallest association coefficient corresponding to the $w_{kl}$   be 1 among all $k$ with given $l$.
Therefore, the optimal $w_{kl}^*$ is thus given as \eqref{PAuser2eq2_2}.

To obtain the optimal $v_{kli}^*$ from \eqref{dueq1_2}, we set the first derivative of objective function to zero, i.e.,
\begin{equation}\label{PAuser2eq2_6}
\frac{\partial\mathcal L (\pmb w, \pmb v, \pmb \beta,\pmb \gamma,\pmb \lambda,\pmb \mu)}
{\partial v_{kli}}=2E_{li} v_{kli} -\gamma_{kli} +\lambda_{kli} +\mu_{kli}=0,
\end{equation}
which yields $v_{kli}=\frac{\gamma_{kli} -\lambda_{kli} -\mu_{kli}}{2E_{li}}$.
Considering constraint $v_{kli} \geq 0$, we can obtain the optimal solution to problem (\ref{pp1min2_12}) as \eqref{PAuser2eq2_6_2}.

}}

\section{Proof of Theorem 2}
\setcounter{equation}{0}
\renewcommand{\theequation}{\thesection.\arabic{equation}}
Introducing  $t_{k31}=\rho_k t_{k3}$ and $t_{k32}=(1-\rho_k) t_{k32}$,
problem \eqref{ee2min2_2} can be equivalent to
\begin{subequations}\label{sf2hdmin1}
\begin{align}
\mathop{\min}_{t_{k31},t_{k32},p_{\pi_k}}\: &
 t_{k31}+t_{k32} \\
\textrm{s.t.}\quad  \:
 &
t_{k32} B\log_2
\left( 1 + \frac {p_{\pi_k} b  h_k^{-\alpha}}{B \sigma^2}
\right) \!\geq\!D_{\pi_k}\\
& \zeta  P b  h_k^{-\alpha}  t_{k31} \geq t_{k31} p_{\pi_k}^{\text{re}}
+ t_{k32} \left(p_{\pi_k}^{\text{tr}}+ \frac{p_{\pi_k} }{\epsilon_{\pi_k}}\right).
\end{align}
\end{subequations}

Observing  \eqref{sf2hdmin1}, constraint (\ref{sf2hdmin1}c) holds with equality for the optimal solution.
Based on constraint (\ref{sf2hdmin1}c) with equality, we can obtain
\begin{equation}\label{sf2hdeq1}
p_{\pi_k}^*=\frac{\epsilon_{\pi_k}( \zeta  P b  h_k^{-\alpha} -p_{\pi_k}^{\text{re}} )t_{k31}}{t_{k32}} -\epsilon_{\pi_k}p_{\pi_k}^{\text{tr}}.
\end{equation}

Substituting \eqref{sf2hdeq1} into problem \eqref{sf2hdmin1} yields
\begin{subequations}\label{sf2hdmin2}
\begin{align}
\mathop{\min}_{ t_{k31},t_{k32}\geq 0}\quad
&t_{k31}+t_{k32} \\
\textrm{s.t.}\qquad
& t_{k32} B\log_2
\left( u_1 + \frac {u_2 t_{k31}}{t_{k32}}
\right) \!\geq\!D_{\pi_k},
\end{align}
\end{subequations}
where $u_1$ and $u_2$ are defined in \eqref{sf2hdeq5_2_2}.
For the optimal solution, constraint (\ref{sf2hdmin2}b) always holds with equality,
which yields
\begin{equation}\label{sf2hdeq2}
t_{k31}^* = \left( 2^{\frac{D_{\pi_k}}{B t_{k32} }} -u_1 \right) \frac{t_{k32}  }
{u_2}.
\end{equation}
Using \eqref{sf2hdeq2}, problem \eqref{sf2hdmin2} is simplified to
\begin{equation}\label{sf2hdmin3}
\mathop{\min}_{  t_{k32}\geq 0}\quad  f(t_{k32})\triangleq t_2+ \left( 2^{\frac{D_{\pi_k}}{B t_{k32}}} -1 \right) \frac{ t_{k32}}
{u_2}.
\end{equation}
Then,
\begin{equation}\label{sf2hdmin3_1}
 f''(t_{k32}) = \frac{(\ln 2)^2 }
{ u_2 B^2 t_{k32}^3}\text e^{\frac{(\ln 2)D_{\pi_k}}{B t_{k32}}}\geq 0,
\end{equation}
which verifies that problem \eqref{sf2hdmin3} is convex,
and the optimal $t_{k32}$ can be obtained by solving $f'(t_{k32})=0$.
Calculating from \eqref{sf2hdmin3}, we have
\begin{equation}\label{sf2hdeq3}
f'(t_{k32})= 1 \!-\! \left(\!\left(\!\frac{(\ln 2)D_{\pi_k}}{Bt_{k32}} -1\!\right)\text e^{\frac{(\ln 2)D_{\pi_k}}{B t_{k32}}} +1\! \right)\frac{1}
{u_2},
\end{equation}
which yields
\begin{equation}\label{sf2hdeq5}
t_{k32}^*= \frac{ (\ln 2)D_{\pi_k}}{ B W \left(  \frac{u_2-1}
{\text e  }\right) +B}.
\end{equation}

As a result, the optimal solution of problem \eqref{ee2min2_2} is provided in Theorem 2.


\section{Proof of Theorem 3}
\setcounter{equation}{0}
\renewcommand{\theequation}{\thesection.\arabic{equation}}
To minimize problem  \eqref{ee2min3_2}, constraints (\ref{ee2min3_2}b) and (\ref{ee2min3_2}c) should hold with equalities for the optimal solution, as otherwise the objective value (\ref{ee2min3_2}a) can be further decreased, which contradicts that the solution is optimal.
Setting (\ref{ee2min3_2}b) and (\ref{ee2min3_2}c) with equalities yields
\begin{equation}\label{sf2fdeq1}
p_{\pi_k}=\left( 2^{\frac{D}{B (t_{k3} -\delta_{\pi_k})}} -1 \right) \frac{\gamma P +B\sigma^2}{b  h_k^{-\alpha}}.
\end{equation}
\begin{equation}\label{sf2fdeq2}
p_{\pi_k}=\frac{\epsilon_{\pi_k}( \zeta  P b  h_k^{-\alpha} -p_{\pi_k}^{\text{re}} )t_{k3}}{t_{k3}-\delta_{\pi_k}} -\epsilon_{\pi_k}p_{\pi_k}^{\text{tr}}.
\end{equation}

Combining \eqref{sf2fdeq1} and \eqref{sf2fdeq2}, the optimal solution of problem  \eqref{ee2min3_2} is given by \eqref{sf2fdeq3_1} and \eqref{sf2fdeq3_2}.

%
%
%
%
%
%

\bibliographystyle{IEEEtran}
\bibliography{IEEEabrv,MMM}

\end{document}